\theoremstyle{plain}
\newtheorem{theorem}{Theorem}[section]
\newtheorem{proposition}[theorem]{Proposition}
\newtheorem{lemma}[theorem]{Lemma}
\newtheorem{corollary}[theorem]{Corollary}
\theoremstyle{definition}
\newtheorem{definition}[theorem]{Definition}
\theoremstyle{remark}
\definecolor{azure}{rgb}{0.0, 0.5, 1.0}
\definecolor{brandeisblue}{rgb}{0.0, 0.44, 1.0}
\definecolor{darkpastelgreen}{rgb}{0.01, 0.75, 0.24}
\definecolor{darkpastelpurple}{rgb}{0.59, 0.44, 0.84}
\definecolor{darktangerine}{rgb}{1.0, 0.66, 0.07}
\definecolor{debianred}{rgb}{0.84, 0.04, 0.33}
\definecolor{hanpurple}{rgb}{0.32, 0.09, 0.98}
\definecolor{deepcerise}{rgb}{0.85, 0.2, 0.53}
\definecolor{emerald}{rgb}{0.31, 0.78, 0.47}
\definecolor{fuchsia}{rgb}{1.0, 0.0, 1.0}
\definecolor{flamingopink}{rgb}{0.99, 0.56, 0.67}
\definecolor{lightseagreen}{rgb}{0.13, 0.7, 0.67}
\definecolor{mayablue}{rgb}{0.45, 0.76, 0.98}
\newcommand{\col}{\textcolor{brandeisblue}}
\newcommand{\EE}{\mathbb{E}}
\newcommand{\V}{\mathcal{V}}
\newcommand{\R}{\mathbb{R}}
\DeclareMathOperator*{\argmin}{arg\,min}
\DeclareMathOperator*{\diag}{diag}
\DeclareMathOperator*{\corr}{corr}
\DeclareMathOperator*{\supp}{supp}
\DeclareMathOperator*{\id}{Id}
\icmltitlerunning{Consensus Multiplicative Weights Update: Learning to Learn using Projector-based Game Signatures}
\begin{document}
\twocolumn[
\icmltitle{Consensus Multiplicative Weights Update: Learning to Learn using Projector-based Game Signatures}
\begin{icmlauthorlist}
\icmlauthor{Nelson Vadori}{yyy}
\icmlauthor{Rahul Savani}{zzz}
\icmlauthor{Thomas Spooner}{yyy}
\icmlauthor{Sumitra Ganesh}{yyy}
\end{icmlauthorlist} 

\icmlaffiliation{yyy}{J.P. Morgan AI Research}

\icmlaffiliation{zzz}{Department of Computer Science, University of Liverpool}

\icmlcorrespondingauthor{Nelson Vadori}{nelson.n.vadori@jpmorgan.com}
\icmlkeywords{game theory, multiplicative weights update, consensus optimization, game decomposition, learning to learn}
\vskip 0.3in
]
 
\printAffiliationsAndNotice{}

\begin{abstract}
Cheung and Piliouras (2020) recently showed that two variants of the Multiplicative Weights Update method - OMWU and MWU - display opposite convergence properties depending on whether the game is zero-sum or cooperative. Inspired by this work and the recent literature on learning to optimize for single functions, we introduce a new framework for learning last-iterate convergence to Nash Equilibria in games, where the update rule's coefficients (learning rates) along a trajectory are learnt by a reinforcement learning policy that is conditioned on the nature of the game: \textit{the game signature}. We construct the latter using a new decomposition of two-player games into eight components corresponding to commutative projection operators, generalizing and unifying recent game concepts studied in the literature. We compare the performance of various update rules when their coefficients are learnt, and show that the RL policy is able to exploit the game signature across a wide range of game types. In doing so, we introduce CMWU, a new algorithm that extends consensus optimization to the constrained case, has local convergence guarantees for zero-sum bimatrix games, and show that it enjoys competitive performance on both zero-sum games with constant coefficients and across a spectrum of games when its coefficients are learnt.
\end{abstract}

\section{Introduction}
\label{sec1}

Motivated by applications in game theory, optimization, and Generative Adversarial Networks, the machine learning community has recently started using variants of the gradient method, such as optimistic gradient \cite{oo} and consensus optimization \cite{co}. These methods enjoy linear convergence in cases like (unconstrained) bilinear games, where standard gradient descent fails (see also \cite{gidel1,gidel2}). In the simplex-constrained setting where "strategy weights" are restricted to be non-negative and sum to one, deriving constant step-size \footnote{in the sense of time-independence.} learning algorithms that display last-iterate convergence is difficult: to the best of our knowledge, existing algorithms include Optimistic Multiplicative Weights Update (OMWU), Optimistic Mirror Descent (OMD) and Optimistic Gradient Descent Ascent (OGDA). Such convergence guarantees with constant step-size were obtained for zero-sum bimatrix games in the case of OMWU \cite{omwu}, and then extended to the convex-concave zero-sum setting \cite{omwu2}; for so-called "coherent" saddle point problems with a Lipschitz condition in the case of OMD, which includes the zero-sum bimatrix case \cite{merti} (later generalized in \cite{hsieh}); for zero-sum games satisfying a so-called Saddle-Point Metric Subregularity condition in the case of OGDA, which includes the bimatrix and convex-concave setting \cite{ogda}. The difficulty is in part due to simplex constraints on the weights: tweaking one coordinate impacts all others, making it difficult for such an update rule to admit Nash Equilibria (NE) as fixed points. Last-iterate convergence is sometimes referred to as "pointwise", and has to be contrasted with convergence in the time-averaged sense.

In this work, we introduce \textbf{\col{Consensus Multiplicative Weights Update (CMWU)}}, a new algorithm that extends consensus optimization to the constrained case. It enjoys local last-iterate convergence to NE in the zero-sum bimatrix case, using constant learning rates. In contrast to OMWU, its update rule from $t$ to $t+1$ only depends on players' strategies at time $t$. In contrast to OMD and OGDA, CMWU does not require the computation of mirror coordinates, although it requires a second order term. OMWU, CMWU and OMD all use softmax projection, but OGDA requires an orthogonal projection onto the unit simplex with respect to the $L_2$ norm, which requires vector sorting. 

CMWU, like OMWU, OMD and OGDA, has convergence guarantees, for bimatrix games, in the zero-sum case only. Recent work \cite{chaos} has shown that MWU and OMWU display opposite convergence properties depending on whether the game is zero-sum or cooperative, where players' payoffs are equal: OMWU (resp. MWU) converges (resp. diverges) in the zero-sum case, and vice-versa in the cooperative case. Since OMWU reduces to MWU with zero coefficient on the gradient difference term $\nabla_t-\nabla_{t-1}$, another way to see this opposite behavior is that if one chooses suitably the update rule's coefficients as a function of the nature of the game, one can hope to get convergence across a spectrum of games. The issue, however, is that a general two-player (2P) game is not purely zero-sum or cooperative, hence we need to answer the question: \textbf{\col{\textit{how to encode a 2P game?}}} Recent literature has studied this question in specific settings: \cite{2pszs} shows that every 2P zero-sum symmetric game can be split into the direct sum of a transitive and a cyclic component. \cite{chaos} introduces trivial games where the payoff is decoupled between the 2 players, and a related triviality metric that allows them to compute to which extent a game is trivial. In the game theory literature, it has been observed that commutative projectors can be used to construct (direct sum) game decompositions, for example \cite{proj} formulates decompositions into 2 and 3 components. We build on this work to formulate a new canonical decomposition of a 2P game into $2^n$ components, given $n$ commutative projectors, and apply this result to the case $n=3$, with projectors associated to zero-sum/cooperative, symmetric/antisymmetric and transitive/cyclic games. Importantly, this view generalizes the concept of transitivity in \cite{2pszs} to any 2P game, and in addition we show that trivial games in \cite{chaos} in fact emanate from the same projector as transitive games. This unified view allows us to view trivial games in terms of their mirror, cyclic games, and gives an alternative to the triviality metric in \cite{chaos}. Based on this decomposition, we introduce a new concept, the \textbf{\col{\textit{game signature}}}, which we use as a \textit{feature} to learn the update rule's coefficients as a function of the game type, extending the learning to optimize (L2O) framework \cite{l2o} to games.

\textbf{Contributions}. \textbf{(1)} We introduce CMWU, a new constant step-size algorithm that enjoys local last-iterate convergence to NE in the constrained zero-sum bimatrix case, for which we show empirically that it enjoys competitive performance with the best known methods, see for example in figure \ref{rps} the case of rock-paper-scissors. \textbf{(2)} We introduce a new concept, the game signature, based on a new canonical decomposition of 2-player games into $2^n$ components, given $n$ commutative projectors, and apply this result to the case $n=3$, with projectors associated to zero-sum/cooperative, symmetric/antisymmetric and transitive/cyclic games. Importantly, this view generalizes the decomposition of \cite{2pszs} to any 2P game, and unifies their transitive games with the trivial games of \cite{chaos}. \textbf{(3)} Based on (1) and (2), we introduce a new framework for last-iterate convergence learning in games, where the algorithm update rule's coefficients along a trajectory are learnt by a reinforcement learning policy which is conditioned on the game signature. To our knowledge this is the first application of L2O to finding NE based on the nature of the game. We show empirically that our learned policy is able to exploit the game signature across a wide range of game types, that CMWU performs well in that framework, where we learn to exploit the gradient in transitive games and the Hessian in cyclic ones, and learn mirror behaviors across zero-sum and cooperative components.

\begin{figure}[th]
  \centering
\centerline{\includegraphics[scale=0.5]{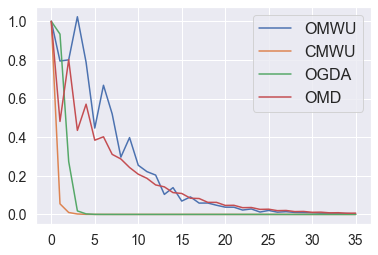}}
  \caption{Rock-paper-scissors. Last-iterate convergence to Nash $\frac{\delta(x^t,y^t)}{ \delta(x^0,y^0)}$. Best hyperparameters.}
  \label{rps}
\end{figure}

\section{CMWU: Consensus Multiplicative Weights Update}
\label{seccmwu}
\textbf{Setting and notations}. We consider a zero-sum bimatrix game where players receive payoffs $x^TAy$ and $-x^TAy$, $A \in \mathcal{M}_{n,m}$ is a matrix of size $n \times m$, $x \in \Delta_n$, $y \in \Delta_m$ the vectors of players' mixed strategies, and $\Delta_k$ the probability simplex in dimension $k$ \footnote{set of vectors with nonnegative coordinates summing to one.}. We define $v_A:=x^{*T}Ay^*$ the value of the game, where $(x^{*},y^{*})$ is a NE \footnote{it is a classical result that at least one NE exists, and that $v_A$ is the same across all possible NE.}. For $z \in \R^k$, we denote $D_z:=\diag(z)$ the diagonal matrix with $z$ on the diagonal.

\textbf{Key insights}. We recall that in the unconstrained case, players' gradients are perturbed by the Hessian terms $AA^Tx$, $A^TAy$ to yield convergence, cf. details in appendix. There are two main insights in our extension of consensus optimization to the constrained setting. \col{\textbf{(1)}} what is the constrained equivalent of the unconstrained Hessian $AA^T$, and of its invertibility requirement guaranteeing its positive-definiteness so as to "bend" the eigenvalues of the game Jacobian below one? \col{\textbf{(2)}} how can we design a Hessian-based update rule that admits any NE as a fixed point? For \col{\textbf{(1)}}, we introduce in definition \ref{hess} "renormalized" versions $H_x:=A^TD_xA$, $H_y:=AD_yA^T$ of the classical Hessian that we call \textbf{\col{simplex-Hessians}}. One way to see this renormalization is from the fact that, by commutativity of the spectrum of a matrix product, $sp(AB)=sp(BA)$, we can see the spectrum of $H_x$ and $H_y$ as scaled versions of that of $AA^T$. Related to invertibility, games like rock-paper-scissors do not have an invertible payoff matrix. Interestingly, our local convergence theorem \ref{t1} requires a \textbf{\col{weak invertibility}} condition (definition \ref{wi}), where weak is both in the sense that it accommodates non-square matrices, and that it requires invertibility only on the subspace of vectors whose coordinates sum to zero. This is a relatively weak condition that explains why the algorithm works well for a large class of payoffs, including rock-paper-scissors. Finally, for \col{\textbf{(2)}}, a key observation is that equilibrium properties specifically make all NE be fixed points of the simplex-Hessian based CMWU update (proposition \ref{sp}). Due to the simplex constraint, it is quite difficult to find update rules that admit NE as fixed points, since tweaking one coordinate impacts the others.

\begin{definition}
\label{hess}
\col{\textbf{(simplex-Hessians)}} The simplex-Hessians $\col{H_x} \in \mathcal{M}_{m,m}$, $\col{H_y} \in \mathcal{M}_{n,n}$ associated to $x \in \Delta_n$, $y \in \Delta_m$ are defined as $H_x:=A^TD_xA$, $H_y:=AD_yA^T$.
\end{definition}

\textbf{The CMWU update.} Simplex-Hessians are symmetric, square matrices by construction. We use them to extend consensus optimization to the constrained case. Let $h>0$ be the learning rate and $\epsilon>0$ the simplex-Hessian coefficient. We consider the multiplicative update $(x^{t+1},y^{t+1})=(\varphi_1(x^{t},y^{t}),\varphi_2(x^{t},y^{t}))$ at iteration stage $t$ in (\ref{cmwu}), for $i \in [1,n]$ and $j \in [1,m]$. Note that, for $\epsilon=0$, we recover the classical MWU. The key observation is proposition \ref{sp} that shows -- using equilibrium properties only -- that any NE is a fixed point of CMWU.

\begin{align}
    \label{cmwu}
    \begin{split}
    &[\varphi_1(x,y)]_i:=  \frac{x_i\exp(h[Ay]_i - h\epsilon [\col{H_y}x]_i)}{\sum_{k=1}^n x_k\exp(h[Ay]_k - h\epsilon [\col{H_y} x]_k)}\\
    &[\varphi_2(x,y)]_j:= \frac{y_j\exp(-h[A^Tx]_j - h\epsilon [\col{H_x}y]_j)}{\sum_{k=1}^m y_k\exp(-h[A^Tx]_k - h\epsilon [\col{H_x} y]_k)}
    \end{split}
\end{align}

\begin{proposition}
\label{sp}
\col{\textbf{(Fixed point property of CMWU)}} Any NE $(x^{*},y^{*})$ of the zero-sum bimatrix game $A$ is a fixed point of the CMWU update (\ref{cmwu}), namely $\varphi_1(x^*,y^*)=x^*$, $\varphi_2(x^*,y^*)=y^*$.
\end{proposition}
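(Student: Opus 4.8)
The plan is to reduce the statement to a one-line fact about multiplicative-weights maps and then verify its hypothesis using the best-response characterisation of Nash equilibria together with one short computation of the simplex-Hessian terms at equilibrium. The fact is: a map of the form $z_i \mapsto z_i e^{c_i}/\sum_k z_k e^{c_k}$ on $\Delta_k$ fixes $z$ as soon as $c_i$ is constant over $\supp(z)$ — indeed, on $\supp(z)$ the numerator is $z_i$ times a global constant which is precisely the denominator, and off $\supp(z)$ both sides vanish. So it suffices to show that the exponents $h[Ay^*]_i - h\epsilon[\col{H_{y^*}}x^*]_i$ and $-h[A^Tx^*]_j - h\epsilon[\col{H_{x^*}}y^*]_j$ in (\ref{cmwu}) are constant on $\supp(x^*)$ and $\supp(y^*)$ respectively. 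I would first record the standard equilibrium (indifference) conditions for the zero-sum bimatrix game: with $v_A = x^{*T}Ay^*$, one has $[Ay^*]_i = v_A$ for all $i \in \supp(x^*)$ (and $[Ay^*]_i \le v_A$ otherwise), and symmetrically $[A^Tx^*]_j = v_A$ for all $j \in \supp(y^*)$ (and $[A^Tx^*]_j \ge v_A$ otherwise).

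The key step — the only non-mechanical one — is to evaluate the simplex-Hessians at the equilibrium. I would compute $D_{y^*}A^Tx^*$ coordinatewise: its $j$-th entry is $y^*_j[A^Tx^*]_j$, which equals $v_A y^*_j$ both when $j \in \supp(y^*)$ (then $[A^Tx^*]_j = v_A$) and when $j \notin \supp(y^*)$ (then $y^*_j = 0$). Hence $D_{y^*}A^Tx^* = v_A y^*$, so $\col{H_{y^*}}x^* = A D_{y^*}A^Tx^* = v_A\,Ay^*$, and the exponent of $\varphi_1$ becomes $h(1-\epsilon v_A)[Ay^*]_i$, which takes the constant value $h(1-\epsilon v_A)v_A$ on $\supp(x^*)$; by the reduction above, $\varphi_1(x^*,y^*)=x^*$. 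The same argument with the roles of the two players swapped gives $D_{x^*}Ay^* = v_A x^*$, hence $\col{H_{x^*}}y^* = v_A A^Tx^*$, so the exponent of $\varphi_2$ is $-h(1+\epsilon v_A)[A^Tx^*]_j$, constant on $\supp(y^*)$, and therefore $\varphi_2(x^*,y^*)=y^*$.

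I do not expect a genuine obstacle here: once one notices that left-multiplying by $D_{y^*}$ (resp.\ $D_{x^*}$) simultaneously annihilates the off-support coordinates — exactly the ones where the linear payoff term may fail to equal $v_A$ — while preserving the on-support coordinates where it does equal $v_A$, the rest is routine. The only points to be careful about are the sign/role bookkeeping between the players (the extra minus sign in the $\varphi_2$ payoff term, and the fact that $\col{H_x}$ is paired with $y$ while $\col{H_y}$ is paired with $x$), and observing that the conclusion is uniform in $\epsilon$, so in particular it recovers the classical fact that Nash equilibria are fixed points of MWU at $\epsilon=0$.
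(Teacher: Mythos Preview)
Your proof is correct and follows essentially the same approach as the paper: both use the equilibrium indifference conditions $[Ay^*]_i=v_A$ on $\supp(x^*)$ and $[A^Tx^*]_j=v_A$ on $\supp(y^*)$ to collapse the simplex-Hessian term via the identity $D_{y^*}A^Tx^*=v_A y^*$ (equivalently $[H_{y^*}x^*]_i=v_A^2$ on $\supp(x^*)$), and then observe that a multiplicative-weights map with exponent constant on the support is the identity there. Your presentation is slightly cleaner in that you isolate the ``constant-on-support'' reduction as a general lemma and derive the full vector identity $H_{y^*}x^*=v_A\,Ay^*$ rather than just its on-support value, but the substance is identical.
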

\begin{proof}
If $x_i^*=0$, $[\varphi_1(x^*,y^*)]_i=x_i^*=0$. By property of the NE \cite{omwu}, $x_i^*\neq 0 \Rightarrow [Ay^*]_i=v_A$, and $y_j^*\neq 0 \Rightarrow [A^Tx^*]_j=v_A$. Hence if $x_i^* \neq 0$, $[H_{y^*}x^*]_i=[A D_{y^*}A^Tx^*]_{i}=\sum_{j=1}^m A_{ij} y^*_j [A^T x^*]_j=\sum_{j:y^*_j\neq 0} A_{ij} y^*_j [A^T x^*]_j = [A y^*]_i v_A= v_A^2$, and so $[\varphi_1(x^*,y^*)]_i=x^*_i \frac{\exp(hv_A - h\epsilon v_A^2)}{\exp(hv_A - h\epsilon v_A^2)\sum_{k=1}^n x_k}=x_i^*$. Similarly $\varphi_2(x^*,y^*)=y^*$ since for $y_j^* \neq 0$, $[H_{x^*}y^*]_j=v_A^2$.
\end{proof}

\begin{definition}
\label{rg}
\col{\textbf{(quasi-strict NE)}} A NE $(x^*,y^*)$ of the zero-sum game $A$ is said to be quasi-strict if $[Ay^*]_i<v_A$ whenever $x_i^*=0$, and $[A^Tx^*]_j>v_A$ whenever $y_j^*=0$.
\end{definition}

Quasi-strict equilibria were introduced originally in \cite{quasistrict}. Every interior NE \footnote{i.e. one where both players play all of their available actions with positive probability.} is by construction quasi-strict, and every bimatrix game has at least one quasi-strict equilibrium \cite{reg,quasistrict4,quasistrict3}. In general, bimatrix games can have multiple quasi-strict equilibria. The proof of local convergence for OMWU \cite{omwu} requires uniqueness of the NE, however in our case we only need the weaker quasi-strict property to handle the non-interior equilibrium coordinates. We now introduce the concept of weak invertibility, which coincides with classical invertibility when $p=q$ and $\V=\R^q$.

\begin{definition}
\label{wi}
\col{\textbf{(weak invertibility)}} Let $\V \subseteq \R^{q}$ a subspace. A possibly non-square $p \times q$ matrix $M$ is said to be weakly $\V-$invertible if $Ker(M) \cap \V=\{0\}$, i.e. if $Mz \neq 0$ whenever $z \in \V$, $z \neq 0$.
\end{definition}

We now discuss the key technical contribution of theorem \ref{t1}. Let $\mathcal{Z}_n \subseteq \R^{n}$ the subspace of vectors whose coordinates sum to zero. $\mathcal{Z}_n$ is a space of dimension $n-1$ as it is the eigenspace of the zero eigenvalue of the "all ones" $J_n$ matrix. Regarding the eigenvalue analysis needed to show convergence in theorem \ref{t1}, OMWU attacks the antidiagonal part of the Jacobian, in the sense that it aims at modifying the latter so as to get convergence. As stated in \cite{omwu2}, which proves local convergence of OMWU in the convex-concave zero-sum setting: \textit{"The key result of spectral analysis in \cite{omwu} is in Lemma B.6 which states that an skew-symmetric matrix has imaginary eigenvalues.} In our CMWU case, we attack the diagonal part of the Jacobian, which does not exploit skew-symmetry and thus necessitates a different proof technique. Specifically, our technical contribution is to see the problem as the zero eigenvalue perturbation of the real symmetric matrices $J_{|\supp(x^*)|}$, $J_{|\supp(y^*)|}$ \footnote{the support $\supp(z)$ of a vector $z$ is a set which elements are the indexes of the non-zero entries of $z$.} which results in highlighting the interplay between the game equilibrium and the corresponding eigenspaces $\mathcal{Z}_{|\supp(x^*)|}$, $\mathcal{Z}_{|\supp(y^*)|}$ of vectors that sum to zero. We show that for small $h$, the perturbed eigenvalues have strictly positive real part, which makes the Jacobian of the update rule have spectral radius strictly less than one. We obtain a concise theorem condition at the intersection of the game payoff matrix and equilibrium, which explains why CMWU will work well on a large class of games which are not invertible in the classical sense, like rock-paper-scissors. Indeed, for the latter, $Az=0$ for $z \neq 0$ only when $z$ has all equal entries, which does not belong to $\mathcal{Z}_3$. 

\begin{theorem}
\label{t1}
\col{\textbf{(local convergence of CMWU)}} Let $(x^*,y^*)$ a quasi-strict NE of the zero-sum game $A$. Let $A_*$ the payoff matrix whose rows have been restricted to $\supp(x^*)$ and columns to $\supp(y^*)$. Assume that $A_*$ (resp. $A_*^T$) is weakly $\mathcal{Z}_{|\supp(y^*)|}-$invertible (resp. $\mathcal{Z}_{|\supp(x^*)|}-$invertible). Then, for sufficiently small $h$ and $\epsilon=C h^{-\delta}$ satisfying \footnote{we use the convention $|v_A^{-1}|:=+\infty$ if $v_A=0$.} $\epsilon<|v_A^{-1}|$ for some $\delta \in [0,1)$, $C>0$ not depending on $h$, there exists a neighborhood $U(h) \subset \Delta_n \times \Delta_m$ of $(x^*,y^*)$ such that for all $(x^0,y^0) \in U(h)$, the CMWU iterates (\ref{cmwu}) converge to the NE, namely $\lim_{t \to \infty} (\varphi_1(x^t,y^t),\varphi_2(x^t,y^t))=(x^*,y^*)$.
\end{theorem}

\section{Learning to reach Nash Equilibria using Game Signatures}
\label{lg}

We introduce \col{\textbf{\textit{game signatures}}} in section \ref{sig} based on a new game decomposition in theorem \ref{decomp}, which we use as a feature to learn CMWU update's coefficients for general 2P games in section \ref{rl}. This new decomposition of a game given a family of $n$ commutative projectors can be seen as a way to \textit{"play Lego"} with the game and is in our opinion an important observation of this work.

\subsection{Projector-based Game Signatures}
\label{sig}

 We let the pure strategy sets of the two players $\mathcal{X}_1$, $\mathcal{X}_2$ be probability spaces equipped with probability measures $\mu_1$, $\mu_2$. The space of 2P games $\mathcal{G}$ can be identified with the space of pairs of functions $f=(f_1,f_2)$ on $\mathcal{X}_1\times\mathcal{X}_2 \to \mathbb{R}$, where player $j$ receives $f_j(x,y)$ when players play the pure strategies $x$ and $y$.
 
 We seek to answer the question: \textbf{\col{\textit{how to encode a 2P game?}}} In the recent work \cite{chaos}, \textit{trivial} games have been introduced in the bimatrix context: this class of games consists of games which payoff is decoupled. In order to quantify how trivial a game is, authors introduce a triviality metric which consists in computing the closest trivial game to it, by taking the minimum (matrix) distance over all possible trivial games. In a separate work \cite{2pszs}, the concept of \textit{transitive} game was introduced in the case of 2P zero-sum symmetric games, and it was proved that every such game can be decomposed uniquely into a transitive and a cyclic component, i.e. that the space of 2P zero-sum symmetric games is equal to the direct sum of transitive and cyclic games. In the bimatrix case, a transitive game has payoff $A_{ij}=a_i-a_j$. We summarize game definitions from these works in definition \ref{gtypes}.
 
\begin{definition}
\label{gtypes}
\col{\textbf{(Zero-sum, cooperative, symmetric, antisymmetric, trivial, transitive and cyclic games.)}} A game $f=(f_1,f_2)$ is zero-sum if $f_1=-f_2$, cooperative if $f_1=f_2$. If $\mathcal{X}_1=\mathcal{X}_2$ and $f^T(x,y):=f(y,x)$, a game is symmetric if $f_1=f_2^T$, antisymmetric if $f_1=-f_2^T$. A game is trivial \cite{chaos} if $f_j(x,y)=\phi_j(x)+\lambda_j(y)$ for some $\phi_j$, $\lambda_j$. A zero-sum symmetric game is transitive \cite{2pszs} if $f_1(x,y)=\phi(x)-\phi(y)$ for some $\phi$, and cyclic if $\int f_1(x,y)d\mu_2(y)=0$ for all $x \in \mathcal{X}_1$.
\end{definition}

It is then natural to ask the following question: \textit{How to extend the concepts of transitive and cyclic components of a zero-sum symmetric game to any 2P game? Is there a link between transitive games and trivial games?}
 
We provide answers to these questions in proposition \ref{unif} and corollary \ref{d8} by generalizing transitivity, unifying it with triviality, and providing a general game decomposition of 8 components valid for \textit{any} 2P game \footnote{this is valid under the assumption that players have the same pure strategy set, otherwise we get a decomposition into 4 components, as specified in corollary \ref{d8}.}, 2 of which being the zero-sum symmetric transitive and cyclic games in \cite{2pszs}, 4 of which being trivial games of \cite{chaos}, while the 4 other ones constitute a generalization of the concept of cyclicity to any 2P game. Actually we go further: we provide a systematic way to obtain game decompositions of $2^n$ components in direct sum, given $n$ commutative projectors. We build on the key observation in \cite{dgame,proj} that \textit{projection operators}, or \textbf{\col{projectors}}, yield game decompositions. In particular, our unified view shows that every 2P game contains a trivial component, which is easy to compute using $\rho_T$. This represents an alternative to the triviality metric in \cite{chaos}: we can simply define the triviality of a game as the norm of its trivial component. Our toolkit of corollary \ref{d8} yields theorem 1 in \cite{2pszs} as a corollary for free, without needing to resort to combinatorial Hodge theory. We begin with the definition of a projector, central to our analysis.

\begin{definition}
\label{proj}
\col{\textbf{(projector)}} A projector $\rho: \mathcal{G} \to \mathcal{G}$ on $\mathcal{G}$ is a linear operator satisfying $\rho^2=\rho$. In that case we have $\mathcal{G}=K_\rho \oplus R_\rho$ and $K_\rho=R_{\id-\rho}$, where $K_\rho$, $R_\rho$ are the kernel and range of $\rho$.
\end{definition}

Here, $\oplus$ is the direct sum operator, indicating that every element of $\mathcal{G}$ can be decomposed uniquely into an element of $K_{\rho}$ and an element of $R_{\rho}$. In that case, $K_{\rho} \cap R_{\rho} =\{0\}$. Our new decomposition in theorem \ref{decomp} and corollary \ref{d8} is an extension to the case $n>2$ of the well-known lemma \ref{pl}.

\begin{lemma}
\label{pl}
If $p_1$, $p_2$ are commutative projectors on $\mathcal{G}$, then $p_1 p_2$ is a projector and:
$$
\mathcal{G}=(K_{p_1} \cap K_{p_2}) \oplus (K_{p_1} \cap R_{p_2}) \oplus (R_{p_1} \cap K_{p_2}) \oplus (R_{p_1} \cap R_{p_2})\ .
$$
\end{lemma}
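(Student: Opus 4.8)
The plan is to first verify that $p_1 p_2$ is a projector, then establish the four-way direct sum by writing down an explicit decomposition of an arbitrary game $f \in \mathcal{G}$ and checking that the four pieces land in the claimed subspaces and that the sum is direct.

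For the first part, since $p_1$ and $p_2$ commute, $(p_1 p_2)^2 = p_1 p_2 p_1 p_2 = p_1 p_1 p_2 p_2 = p_1^2 p_2^2 = p_1 p_2$, so $p_1 p_2$ is idempotent, hence a projector. For the decomposition, I would introduce the four commuting operators $q_{11} := p_1 p_2$, $q_{10} := p_1(\id - p_2)$, $q_{01} := (\id - p_1)p_2$, $q_{00} := (\id - p_1)(\id - p_2)$. Using commutativity, each $q_{ab}$ is idempotent (same computation as above, since $\id - p_i$ is also a projector and commutes with $p_j$), the products $q_{ab} q_{a'b'}$ vanish whenever $(a,b) \neq (a',b')$ (because they contain a factor $p_i(\id - p_i) = 0$), and $q_{11} + q_{10} + q_{01} + q_{00} = \bigl(p_1 + (\id - p_1)\bigr)\bigl(p_2 + (\id - p_2)\bigr) = \id$. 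Thus for any $f$, writing $f = q_{11} f + q_{10} f + q_{01} f + q_{00} f$ gives the existence of the decomposition.

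Next I would identify the image of each $q_{ab}$ with the corresponding intersection. For instance, $q_{11} f = p_1 p_2 f$ satisfies $p_1(q_{11} f) = p_1^2 p_2 f = q_{11} f$, so $q_{11} f \in R_{p_1}$; and since $p_1, p_2$ commute, $p_2(q_{11}f) = q_{11}f$ as well, so $q_{11} f \in R_{p_1} \cap R_{p_2}$. For $q_{10} f = p_1(\id - p_2) f$ we get $p_1(q_{10}f) = q_{10}f$ (so in $R_{p_1}$) and $p_2(q_{10}f) = p_1 p_2(\id - p_2)f = 0$ (so in $K_{p_2}$), giving $q_{10} f \in R_{p_1} \cap K_{p_2}$; the other two cases are symmetric. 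Conversely, any $g \in R_{p_1} \cap R_{p_2}$ satisfies $q_{11} g = p_1 p_2 g = g$ and $q_{ab} g = 0$ for the other indices, which shows $R_{q_{ab}}$ is exactly the stated intersection and simultaneously gives uniqueness: if $f = g_{11} + g_{10} + g_{01} + g_{00}$ with $g_{ab}$ in the respective intersection, applying $q_{a'b'}$ to both sides recovers $g_{a'b'} = q_{a'b'} f$. This establishes that the sum is direct.

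The main obstacle is bookkeeping rather than any deep idea: one must be careful that commutativity of $p_1$ and $p_2$ genuinely propagates to all the operators $\id - p_i$ and the products $q_{ab}$, and that the "membership in an intersection" and "fixed by the corresponding $q_{ab}$" characterizations are equivalent — this equivalence is what simultaneously delivers both the direct-sum property and uniqueness. Everything else is the routine algebra of orthogonal idempotents summing to the identity.
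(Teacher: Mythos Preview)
Your proof is correct, but the route differs from the paper's. You build the four mutually annihilating idempotents $q_{ab}$ explicitly, check they sum to $\id$, and identify each range with the corresponding intersection; this is the standard ``resolution of identity'' argument. The paper instead proceeds in two abstract steps: (i) if $u$ commutes with a projector $q$, then $K_q$ and $R_q$ are $u$-stable; (ii) a subspace $\mathcal{K}$ is $q$-stable iff $\mathcal{K}=(\mathcal{K}\cap K_q)\oplus(\mathcal{K}\cap R_q)$. It then applies (i)--(ii) twice: first $\mathcal{G}=K_{p_1}\oplus R_{p_1}$, then each summand, being $p_2$-stable by (i), splits via (ii). Your approach is more direct and makes the projectors onto each summand explicit; the paper's approach packages the argument into two reusable stability lemmas, which it then invokes verbatim in the induction step for Theorem~\ref{decomp} (the $2^n$-component decomposition). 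Your idempotent construction would generalize to $n$ projectors just as well, by forming the $2^n$ products $\prod_i p_i^{a_i}(\id-p_i)^{1-a_i}$, so neither approach has a real advantage in scope.
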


\begin{theorem}
\label{decomp}
\col{\textbf{(Game Decomposition)}}
Let $(\rho_i)_{i \in [1,n]}$ a family of $n$ commutative projectors on $\mathcal{G}$. Then we have the canonical direct sum decomposition into $2^n$ components:
\begin{align}
    \mathcal{G}= \bigoplus_{\mathcal{C}_i \in \{K_{\rho_i},R_{\rho_i}\}} \bigcap_{i=1}^n \mathcal{C}_i\ .
\end{align}
\end{theorem}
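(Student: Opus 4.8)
The plan is to prove Theorem~\ref{decomp} by induction on $n$, using Lemma~\ref{pl} as the base case ($n=2$) and as the engine of the inductive step. The key structural fact I will exploit is that the operations ``intersect with $K_\rho$'' and ``intersect with $R_\rho$'' behave well with respect to direct sums precisely when the projectors commute: if $\rho$ is a projector commuting with a family of projectors whose ranges and kernels cut $\GG$ into a direct sum $\GG = \bigoplus_\alpha W_\alpha$, then each $W_\alpha$ is itself invariant under $\rho$ (because $W_\alpha$ is a finite intersection of kernels/ranges of projectors that commute with $\rho$, and $\rho$ maps each such kernel into itself and each such range into itself), and consequently $W_\alpha = (W_\alpha \cap K_\rho) \oplus (W_\alpha \cap R_\rho)$ by restricting the splitting $\id = (\id-\rho) + \rho$ to $W_\alpha$.

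Concretely, I would proceed as follows. First, establish the auxiliary claim: if $\rho, \sigma$ are commuting projectors, then $K_\rho$ and $R_\rho$ are each invariant under $\sigma$, and hence $K_\rho = (K_\rho \cap K_\sigma) \oplus (K_\rho \cap R_\sigma)$ and similarly for $R_\rho$; this is essentially the content of Lemma~\ref{pl} re-read one subspace at a time, and it also shows that if $g \in W$ for a $\sigma$-invariant subspace $W$, then its components $\sigma g$ and $(\id - \sigma)g$ stay in $W$. Second, set up the induction: assume $\GG = \bigoplus_{\CC_i \in \{K_{\rho_i}, R_{\rho_i}\}, i \le n-1} \bigcap_{i=1}^{n-1}\CC_i$. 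Each summand $W_{\CC_1,\dots,\CC_{n-1}} := \bigcap_{i=1}^{n-1}\CC_i$ is an intersection of kernels and ranges of $\rho_1,\dots,\rho_{n-1}$, all of which commute with $\rho_n$; by the auxiliary claim each is $\rho_n$-invariant, so $W_{\CC_1,\dots,\CC_{n-1}} = (W_{\CC_1,\dots,\CC_{n-1}} \cap K_{\rho_n}) \oplus (W_{\CC_1,\dots,\CC_{n-1}} \cap R_{\rho_n})$. Third, substitute this splitting of each summand back into the inductive direct-sum decomposition of $\GG$; since a direct sum of subspaces, each further split as an internal direct sum, is again an internal direct sum of all the pieces, we obtain $\GG = \bigoplus_{\CC_i \in \{K_{\rho_i}, R_{\rho_i}\}, i \le n} \bigcap_{i=1}^{n}\CC_i$, which is exactly the claimed $2^n$-component decomposition. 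Finally, I would note that $\bigcap_{i=1}^n \CC_i$ is the range of the projector $\prod_{i=1}^n \CC_i'$ where $\CC_i' = \rho_i$ if $\CC_i = R_{\rho_i}$ and $\CC_i' = \id - \rho_i$ otherwise (all these factors commute and each is a projector, so the product is a projector), which justifies calling the decomposition ``canonical'' and records the explicit projector onto each component.

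The main obstacle is purely bookkeeping rather than conceptual: one must be careful that ``splitting each summand of a direct sum'' genuinely yields a direct sum of all the refined pieces. The point to verify is that if $\GG = \bigoplus_\alpha W_\alpha$ (internal direct sum) and each $W_\alpha = W_\alpha' \oplus W_\alpha''$ (internal direct sum inside $W_\alpha$), then $\GG = \bigoplus_\alpha (W_\alpha' \oplus W_\alpha'')$ is an internal direct sum of the $2|\{\alpha\}|$ subspaces $W_\alpha', W_\alpha''$: spanning is immediate, and independence follows because any relation among the refined pieces, grouped by $\alpha$, gives an element of $W_\alpha$ on each group, which must vanish by independence of the $W_\alpha$, and then $W_\alpha' \cap W_\alpha'' = \{0\}$ finishes it. The other place requiring care is the invariance claim: one should check that a projector $\rho$ commuting with a projector $\sigma$ satisfies $\sigma(K_\rho) \subseteq K_\rho$ and $\sigma(R_\rho) \subseteq R_\rho$ --- for $g \in K_\rho$, $\rho \sigma g = \sigma \rho g = 0$ so $\sigma g \in K_\rho$; for $g = \rho g \in R_\rho$, $\sigma g = \sigma \rho g = \rho \sigma g \in R_\rho$ --- and that an intersection of $\sigma$-invariant subspaces is $\sigma$-invariant, which is immediate. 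With these two elementary facts in hand the induction runs cleanly.
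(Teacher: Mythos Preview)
Your proposal is correct and follows essentially the same route as the paper's proof: induction on $n$, with the inductive step consisting of showing that each summand $\bigcap_{i\le n-1}\CC_i$ is $\rho_n$-invariant (via commutativity applied to each kernel/range in the intersection) and then splitting it as $(W\cap K_{\rho_n})\oplus(W\cap R_{\rho_n})$. The paper's argument is slightly terser---it defers the invariance and splitting facts to the proof of Lemma~\ref{pl} rather than restating them---but the logical skeleton is identical; your additional remarks on the direct-sum bookkeeping and the explicit projector $\prod_i \CC_i'$ onto each component are correct refinements that the paper leaves implicit.
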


Theorem \ref{decomp} tells us informally that starting from a game, we can compute its components easily by $n$ compositions of projectors of the form $\rho_i$ or $\id-\rho_j$, where the order doesn't matter because of commutativity.

Let $\mathcal{N}_2f(x,y):= \int f(x,z)d\mu_2(z)$, $\mathcal{N}_1f(x,y):= \int f(z,y)d\mu_1(z)$, $\widehat{\mathcal{N}}:= \mathcal{N}_1+\mathcal{N}_2-\mathcal{N}_1\mathcal{N}_2$. We define the operators $\bm{\col{\rho_{Z}}}$, $\bm{\col{\rho_{S}}}$, $\bm{\col{\rho_{T}}}$ from $\mathcal{G} \to \mathcal{G}$ associated to zero-sum, symmetric and transitive games as:
\begin{align}
\label{refeqop}
\begin{split}
& \bm{\col{\rho_{Z}}}: (f_1,f_2) \to \frac{1}{2}(f_1-f_2, f_2-f_1) \\
&\bm{\col{\rho_{S}}}: (f_1,f_2) \to \frac{1}{2}(f_1+f_2^T, f_2+f_1^T)\\
& \bm{\col{\rho_{T}}}: (f_1,f_2) \to (\widehat{\mathcal{N}}f_1, \widehat{\mathcal{N}}f_2)
\end{split}
\end{align}

In the zero-sum symmetric case, $\rho_T$ reduces to the form $\phi(x)-\phi(y)$ in \cite{2pszs}, but in general it is different (of the form $\phi(x)+\lambda(y)$) and allows to define the transitive/cyclic components of any 2P game given by $\rho_T$ and $\id-\rho_T$. Similarly, $\rho_{Z}$, $\id-\rho_{Z}$, $\rho_{S}$, $\id-\rho_{S}$ compute the zero-sum, cooperative, symmetric and antisymmetric components of a game. \cite{proj} observed that $\rho_{Z}$ and $\rho_{T}$ commute, and use it to obtain decompositions into 2 and 3 components. In our case, we use theorem \ref{decomp} to obtain our new decompositions into $2^n$ components, which allows to generalize and unify \cite{2pszs,chaos}. For example, the cooperative-symmetric-cyclic component of a game $f$ is simply given by $(\id-\rho_Z) \rho_S (\id-\rho_T)f$, and every component is obtained by applying such compositions of projectors.

\begin{proposition}
\label{unif}
\col{\textbf{(generalization of transitivity/cyclicity and unification with triviality)}} If a 2P game $f$ is zero-sum symmetric transitive (resp. cyclic) in the sense of \cite{2pszs}, then $f \in R_{\rho_{T}}$ (resp. $f \in K_{\rho_{T}}=R_{\id-\rho_T}$). Further, a 2P game is trivial in the sense of \cite{chaos} if and only if it is transitive, i.e. the class of trivial games coincides with $R_{\rho_{T}}$.
\end{proposition}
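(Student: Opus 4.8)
The plan is to reduce all three assertions to a single structural description of the operator $\widehat{\mathcal{N}}$ underlying $\rho_T$, and then read the claims off componentwise. First I would record that $\id-\widehat{\mathcal{N}}=(\id-\mathcal{N}_1)(\id-\mathcal{N}_2)$, that $\mathcal{N}_1$ and $\mathcal{N}_2$ are commuting projectors (idempotency because $\mu_1,\mu_2$ are probability measures, commutativity by Fubini), hence $\widehat{\mathcal{N}}$ is a projector; applying lemma \ref{pl} to the commuting pair $\id-\mathcal{N}_1,\id-\mathcal{N}_2$ then gives $K_{\widehat{\mathcal{N}}}=R_{\id-\widehat{\mathcal{N}}}=K_{\mathcal{N}_1}\cap K_{\mathcal{N}_2}=\{g:\mathcal{N}_1 g=\mathcal{N}_2 g=0\}$. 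Next I would characterise the range: since $\mathcal{N}_2 g$ depends only on the first variable, $\mathcal{N}_1 g$ only on the second, and $\mathcal{N}_1\mathcal{N}_2 g$ is a constant, every $\widehat{\mathcal{N}} g$ has the form $\phi(x)+\lambda(y)$; conversely a one-line computation shows $\widehat{\mathcal{N}}(\phi(x)+\lambda(y))=\phi(x)+\lambda(y)$. So $R_{\widehat{\mathcal{N}}}$ is precisely the space of additively separable functions on $\mathcal{X}_1\times\mathcal{X}_2$.

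From this lemma the ``trivial $\iff$ transitive'' claim is immediate. Because $\rho_T$ acts as $\widehat{\mathcal{N}}$ on each payoff component, $f=(f_1,f_2)\in R_{\rho_T}$ iff $f_1,f_2\in R_{\widehat{\mathcal{N}}}$ iff each $f_j(x,y)=\phi_j(x)+\lambda_j(y)$, which is exactly the definition of a trivial game in \cite{chaos}; hence $R_{\rho_T}$ coincides with the class of trivial games. The transitive half of the first assertion is then a special instance: a zero-sum symmetric transitive game (in the sense of \cite{2pszs}) has $f_1(x,y)=\phi(x)-\phi(y)$ and $f_2=-f_1$, both additively separable, so $f\in R_{\rho_T}$.

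For the cyclic half I would proceed as follows. Let $f$ be zero-sum symmetric cyclic in the sense of \cite{2pszs}; symmetry forces $\mathcal{X}_1=\mathcal{X}_2$ and $\mu_1=\mu_2$, and zero-sum together with symmetry gives $f_1(x,y)=-f_1(y,x)$ and $f_2=-f_1$. Cyclicity is the statement $\mathcal{N}_2 f_1=0$, which also yields $\mathcal{N}_1\mathcal{N}_2 f_1=0$; and using the antisymmetry of $f_1$ together with $\mu_1=\mu_2$, $\mathcal{N}_1 f_1(x,y)=\int f_1(z,y)\,d\mu_1(z)=-\int f_1(y,z)\,d\mu_2(z)=-(\mathcal{N}_2 f_1)(y,\cdot)=0$. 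Hence $\widehat{\mathcal{N}} f_1=0$, and likewise $\widehat{\mathcal{N}} f_2=0$, so $f\in K_{\rho_T}=R_{\id-\rho_T}$.

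I expect the only genuinely non-mechanical point to be this last transfer: the definition of cyclicity in \cite{2pszs} controls only one of the two marginals of $f_1$, so one must exploit the zero-sum/symmetric structure — the antisymmetry of $f_1$ and the equality $\mu_1=\mu_2$ — to force the other marginal to vanish, which is precisely what places the game in $K_{\widehat{\mathcal{N}}}$ rather than merely in $K_{\mathcal{N}_2}$. Everything else — idempotency of $\widehat{\mathcal{N}}$, the separable-range characterisation of $R_{\widehat{\mathcal{N}}}$, and the componentwise identification of $R_{\rho_T}$ with trivial games — is routine verification.
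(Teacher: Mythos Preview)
Your argument is correct and complete. The route differs from the paper's in organisation rather than content: the paper verifies each of the three claims by a separate direct computation of $\widehat{\mathcal{N}} f_k$ (expanding the three integrals in each case), whereas you first record the factorisation $\id-\widehat{\mathcal{N}}=(\id-\mathcal{N}_1)(\id-\mathcal{N}_2)$ and use it, via lemma~\ref{pl}, to characterise once and for all $R_{\widehat{\mathcal{N}}}$ as the additively separable functions and $K_{\widehat{\mathcal{N}}}=K_{\mathcal{N}_1}\cap K_{\mathcal{N}_2}$, then read everything off. Your approach buys a bit more: it explains structurally \emph{why} $\widehat{\mathcal{N}}$ is a projector (a fact the paper proves separately, again by direct expansion, in the proof of corollary~\ref{d8}), and it makes transparent that the only place where the zero-sum/symmetric hypotheses are genuinely needed is the cyclic case, to pass from $\mathcal{N}_2 f_1=0$ to $\mathcal{N}_1 f_1=0$ via $f_1=-f_1^T$ and $\mu_1=\mu_2$ --- exactly the point you flag as the non-mechanical step, and exactly what the paper's computation does implicitly. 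The paper's version has the virtue of being entirely self-contained computation with no appeal to the projector lemma; yours is shorter and more conceptual.
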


\begin{corollary}
\label{d8}
\col{\textbf{(Canonical decomposition of 2P games)}}
$\rho_{Z}$, $\rho_{T}$ are commutative projectors. Consequently, by theorem \ref{decomp}, every 2P game can be decomposed into the direct sum of 4 components: \col{\textbf{(ZT)}} zero-sum-transitive, \col{\textbf{(ZCy)}} zero-sum-cyclic, \col{\textbf{(CT)}} cooperative-transitive, \col{\textbf{(CCy)}} cooperative-cyclic. Further, if $\mathcal{X}_1=\mathcal{X}_2$, then $\rho_{S}$ is a well-defined projector and commutes with $\rho_{Z}$, $\rho_{T}$. Consequently, any such game can be decomposed into the direct sum of 8 components: \col{\textbf{(ZST)}} zero-sum-symmetric-transitive, \col{\textbf{(ZSCy)}} zero-sum-symmetric-cyclic, \col{\textbf{(ZAT)}} zero-sum-antisymmetric-transitive, \col{\textbf{(ZACy)}} zero-sum-antisymmetric-cyclic, 
\col{\textbf{(CST)}} cooperative-symmetric-transitive, \col{\textbf{(CSCy)}} cooperative-symmetric-cyclic, \col{\textbf{(CAT)}} cooperative-antisymmetric-transitive, \col{\textbf{(CACy)}} cooperative-antisymmetric-cyclic.
\end{corollary}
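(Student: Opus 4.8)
The plan is to reduce everything to Theorem \ref{decomp}: once we know that $\rho_Z$ and $\rho_T$ (and $\rho_S$ when $\mathcal{X}_1=\mathcal{X}_2$) are pairwise commuting projectors, the $4$- and $8$-component decompositions are just the instances $n=2$ and $n=3$ of that theorem, and all that remains is to name each piece $\bigcap_i\mathcal{C}_i$ by reading off what $K_\rho$ and $R_\rho$ are for each $\rho$. So the argument splits into three blocks: (i) each of the three operators is a linear idempotent; (ii) the three operators pairwise commute; (iii) identification of kernels and ranges.

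For (i), linearity is immediate since all three maps are assembled from $f\mapsto f_1\pm f_2$, the transpose $f\mapsto f^T$ (well defined only when $\mathcal{X}_1=\mathcal{X}_2$, which is why $\rho_S$ needs that hypothesis), and the averaging operators $\mathcal{N}_1,\mathcal{N}_2$, all linear. Idempotence of $\rho_Z$ and $\rho_S$ is a one-line check using $(f^T)^T=f$. The only genuine point is $\rho_T$: first I would record that each $\mathcal{N}_i$ is itself a projector ($\mathcal{N}_i^2=\mathcal{N}_i$, since re-averaging over a coordinate it already does not depend on changes nothing, $\mu_i$ being a probability measure) and that $\mathcal{N}_1\mathcal{N}_2=\mathcal{N}_2\mathcal{N}_1$ equals the double average $\iint f\,d\mu_1 d\mu_2$. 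Then the key algebraic identity is $\widehat{\mathcal{N}}=\id-(\id-\mathcal{N}_1)(\id-\mathcal{N}_2)$; since $\id-\mathcal{N}_1$ and $\id-\mathcal{N}_2$ are commuting idempotents, their product is an idempotent (the same elementary computation as in Lemma \ref{pl}), so $\widehat{\mathcal{N}}$, being $\id$ minus a projector, is a projector, and hence so is $\rho_T$ acting coordinatewise.

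For (ii), the guiding principle is that $\rho_Z$ and $\rho_S$ act on a game only by forming fixed linear combinations of $f_1,f_2$ and transposing, whereas $\rho_T$ applies the single operator $\widehat{\mathcal{N}}$ to each coordinate separately; a coordinatewise linear operator commutes with any fixed linear recombination of coordinates, which disposes of the pair $(\rho_Z,\rho_T)$ at once, and of $(\rho_T,\rho_S)$ once we know $\widehat{\mathcal{N}}$ commutes with $f\mapsto f^T$. That last commutation reduces to $(\mathcal{N}_1 f)^T=\mathcal{N}_2(f^T)$ and $(\mathcal{N}_2 f)^T=\mathcal{N}_1(f^T)$, which hold because interchanging the two arguments swaps the two marginalizations — here one uses that the two strategy spaces carry the same measure, implicit in the symmetric setting (the uniform measure in the bimatrix case) — together with $\mathcal{N}_1\mathcal{N}_2 f^T=(\mathcal{N}_1\mathcal{N}_2 f)^T$; combining these gives $\widehat{\mathcal{N}}(f^T)=(\widehat{\mathcal{N}}f)^T$. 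The pair $(\rho_Z,\rho_S)$ is a direct expansion: both $\rho_Z\rho_S(f_1,f_2)$ and $\rho_S\rho_Z(f_1,f_2)$ come out to $\tfrac14\bigl(f_1-f_2+f_2^T-f_1^T,\ f_2-f_1+f_1^T-f_2^T\bigr)$. I expect the $(\rho_T,\rho_S)$ commutation to be the delicate step, since it is the one place the hypothesis $\mathcal{X}_1=\mathcal{X}_2$ (with matching measures) is actually invoked.

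For (iii), I would unwind the definitions: $f\in R_{\rho_Z}$ iff $f_1=-f_2$ (zero-sum) and $f\in K_{\rho_Z}=R_{\id-\rho_Z}$ iff $f_1=f_2$ (cooperative); $f\in R_{\rho_S}$ iff $f_1=f_2^T$ (symmetric) and $f\in K_{\rho_S}$ iff $f_1=-f_2^T$ (antisymmetric); $f\in R_{\rho_T}$ iff $\widehat{\mathcal{N}}f_j=f_j$, which by Proposition \ref{unif} is exactly the class of transitive (equivalently trivial) games, and $f\in K_{\rho_T}=R_{\id-\rho_T}$ is the generalized cyclic class. Substituting these into the $n=2$ instance of Theorem \ref{decomp} (projectors $\rho_Z,\rho_T$) yields the four components ZT, ZCy, CT, CCy, and into the $n=3$ instance (adding $\rho_S$, legitimate when $\mathcal{X}_1=\mathcal{X}_2$) yields the eight components ZST, ZSCy, ZAT, ZACy, CST, CSCy, CAT, CACy, as claimed.
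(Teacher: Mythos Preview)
Your proof is correct and follows the same structure as the paper's: verify that $\rho_Z$, $\rho_S$, $\rho_T$ are projectors, check pairwise commutativity (with the $(\rho_T,\rho_S)$ pair requiring $\mu_1=\mu_2$ via $(\widehat{\mathcal{N}}f)^T=\widehat{\mathcal{N}}(f^T)$), then invoke Theorem~\ref{decomp}. Your use of the factorization $\widehat{\mathcal{N}}=\id-(\id-\mathcal{N}_1)(\id-\mathcal{N}_2)$ to obtain idempotence of $\rho_T$ is cleaner than the paper's direct expansion of $\widehat{\mathcal{N}}^2$, but otherwise the arguments coincide.
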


Finally, we define the concept of game norm, which is used to construct the game signature based on theorem \ref{decomp} and corollary \ref{d8}. 

\begin{definition}
\label{gnorm}
\col{\textbf{(Game norm)}} Let $f=(f_1,f_2) \in \mathcal{G}$ a game, and $||\cdot||$ a norm on the space of functions $\mathcal{X}_1\times\mathcal{X}_2 \to \mathbb{R}$. The game norm $||f||$ is defined as $||f||:=\frac{1}{2}(||f_1||+||f_2||)$.
\end{definition}

\begin{definition}
\label{gsig}
\col{\textbf{(Game signature)}} Let $(\rho_i)_{i \in [1,n]}$ a family of $n$ commutative projectors. The game signature is defined as the vector of size $2^n$ consisting of the norms of the $2^n$ game components associated to the decomposition of theorem \ref{decomp}.
\end{definition}

Let us look at how to compute the game components and signature in the bimatrix case with payoff matrices $(A,B)$. We have from (\ref{refeqop}), $\rho_Z(A,B)=\frac{1}{2}(A-B,B-A)$, $\rho_S(A,B)=\frac{1}{2}(A+B^T,B+A^T)$, $\rho_T(A,B)=(\widehat{A}^{(1)}+\widehat{A}^{(2)}-\widehat{A},
\widehat{B}^{(1)}+\widehat{B}^{(2)}-\widehat{B})$, where $\widehat{A}^{(1)}$, $\widehat{A}^{(2)}$, $\widehat{A}$ are the matrices with entries $\widehat{A}^{(1)}_{ij}:=\frac{1}{m}\sum_{j=1}^m A_{ij}$, $\widehat{A}^{(2)}_{ij}:=\frac{1}{n}\sum_{i=1}^n A_{ij}$, $\widehat{A}_{ij}:=\frac{1}{mn}\sum_{i=1}^n \sum_{j=1}^m A_{ij}$. The projectors associated respectively to cooperative, antisymmetric and cyclic games are $\rho_C(A,B):=[\id-\rho_Z](A,B)$, $ \rho_A(A,B):=[\id-\rho_S](A,B)$, $\rho_{Cy}(A,B):=[\id-\rho_T](A,B)$. With these definitions, the 8 components of corollary \ref{d8} are simply computed by composition of above projectors, namely $\rho_Z\rho_S\rho_T$, $\rho_Z\rho_S\rho_{Cy}$, $\rho_Z\rho_{A}\rho_T$, $\rho_Z\rho_{A}\rho_{Cy}$, $\rho_{C}\rho_S\rho_T$, $\rho_{C}\rho_S\rho_{Cy}$, $\rho_{C}\rho_{A}\rho_T$, $\rho_{C}\rho_{A}\rho_{Cy}$. Following definitions 6 and 7, the game signature is taken to be the vector of size 8 containing the norms $\frac{1}{2}(||A_i||+||B_i||)$ of the components $(A_i,B_i)_{i \in [1,8]}$, where the matrix norm is chosen to be the $L_2$ norm. We further divide these 8 norms by their overall sum so as to interpret them as weights.

\subsection{RL formulation of signature-based learning in games}
\label{rl}

\textbf{Game signature-based learning.} We use our previous analysis to extend to games the framework developed in \cite{l2o}, where a RL policy $\pi$ outputs the step $\Delta x$ in the function optimization update $x^{t+1} \leftarrow x^t +\Delta x$ as a function of current and past "trajectory information" such as gradient and function value at previous points. Since finding one NE of a bimatrix game is PPAD-hard \cite{DBLP:journals/jacm/ChenDT09,DBLP:journals/siamcomp/DaskalakisGP09}, we cannot expect a universal learnt policy to be polynomial-time for unrestricted inputs. However, worst-case "PPAD-hard instances" are very pathological so there is every possibility for good performance over many reasonable distributions over games. 

There are two fundamental differences between games and single functions: \textbf{(i)} the solution concept is not the same: in games we are not trying to minimize a function, but to find a NE \footnote{other solution concepts exist, such as correlated equilibria.}. \textbf{(ii)} the update rule should depend on the nature of the game, as discussed in section \ref{sec1} on the mirror behavior of CMWU and MWU. \textbf{(i)} can be dealt with using the concept of \textbf{\col{best response gaps}} in definition \ref{ng}, whose sum $\delta$ is also called "Nash convergence metric" in \cite{nd}. Indeed, by definition, $(x^*,y^*)$ is a NE if and only if $\delta_1(x^*,y^*)=\delta_2(x^*,y^*)=0$. One of the main ideas in this work is to give an answer to \textbf{(ii)} using our concept of game signature in definition \ref{gsig}, which will allow to learn a range of behaviors across the spectrum of games. In contrast to classical optimization where a suite of convergent algorithms exist, there is no algorithm that displays last-iterate convergence to NE in the case of general 2P games. In the following we consider bimatrix games $(A,B)$ of size $n \times m$, where $x \in \Delta_n$, $y \in \Delta_m$.

\begin{definition}
\label{ng}
\col{\textbf{(best response gap)}} The best response gaps of players 1 and 2 are defined as:
\begin{align}
\begin{split}
\delta_1(x,y):= \max_{i \in [1,n]} [Ay]_i-x^TAy \\
\delta_2(x,y):=\max_{j \in [1,m]} [x^TB]_j-x^TBy\\
\delta(x,y):=\delta_1(x,y)+\delta_2(x,y)
\end{split}
\end{align}
\end{definition}

\textbf{RL state, actions and rewards.} We frame our learning problem in the standard single-agent RL paradigm. We consider the modified CMWU update (\ref{cmwu}) where each player $k \in \{1,2\}$ is allowed to have its own gradient and Hessian coefficients $h_k$, $\epsilon_k$, on which we relax the positivity constraint. Note that in the bimatrix case $(A,B)$, simplex-Hessians become $H_x:=A^TD_xA$, $H_y:=BD_yB^T$, see e.g. \cite{dmg} for the unconstrained case. We allow $h_k=h_k(x^t,y^t)$, $\epsilon_k=\epsilon_k(x^t,y^t)$ to depend on current trajectory location, importantly we do not allow these parameters to depend explicitly on time $t$. At stage $t$, the \col{\textbf{state}} $s_t$ of the RL policy consists of the game signature plus trajectory information $\delta_1(x^t,y^t)$, $\delta_2(x^t,y^t)$, $Ay^t$, $B^Tx^t$, $H_{x^t}$, $H_{y^t}$, $x^{t,T}Ay^t$, $x^{t,T}By^t$. Its \col{\textbf{action}} $a_t \sim \pi(\cdot|s_t) \in \mathbb{R}^4$ consists of $a_t=(h_1,h_2,h_1\epsilon_1,h_2\epsilon_2)$. We optimize the following objective over a trajectory of length $\tau$:
\begin{align}
    \max_\pi \EE_\pi \sum_{t=0}^{\tau-1} R(s_t,a_t), \hspace{1mm} R(s_t,a_t):=-\frac{\delta(x^{t+1},y^{t+1})}{ \delta(x^0,y^0)}
\end{align}
where the expectation is taken over the initial location $(x^0,y^0)$, the payoff matrices $A$, $B$ of the two players (randomly sampled at the beginning of the episode), and the stochasticity of $\pi$. We formulated our problem using RL, as RL policies tend to be more stable and overcome the problem of compounding errors, where errors accumulated on a trajectory could lead to divergence \cite{l2o}. This allowed us in practice to train on long trajectories $\tau=1000$. However, we believe that it would also be possible to use LSTM-based approaches \cite{l2l,blo}, or \cite{l20mm} that introduces an architecture to solve unconstrained min-max problems, and possibly output directly the updates $\Delta x$, $\Delta y$ instead of the update rule's coefficients. We leave this topic for future research. On the other hand, outputing the coefficients allows us to obtain nice relationships in section \ref{exp} such as high discrepancy between gradient and Hessian coefficients on cyclic games, as opposed to transitive games where the gradient coefficient is high, and build on the mirror relationship seen in \cite{chaos}, which another way to reformulate is that the coefficients should be chosen as a function of the game type.

\section{Experiments}
\label{exp}

\textbf{Zero-sum bimatrix games.} We first evaluate in table \ref{tab2} the performance of CMWU against known algorithms on 200 randomly sampled bimatrix zero-sum games in dimension $n=m=$ 25, 50, 75, 100, where each matrix entry, as well as the initial location $x^0$, $y^0$ are sampled independently $\mathcal{U}[-1,1]$, where $\mathcal{U}$ is the uniform distribution \footnote{we apply softmax to $x^0$, $y^0$.}. To quantify speed of convergence to NE and treat different seeds equally with respect to their random starting point, we consider the convergence metric $\beta_\tau:=100 \cdot \tau^{-1}\sum_{t=1}^\tau \frac{\delta(x^t,y^t)}{ \delta(x^0,y^0)}$ which quantifies the trajectory-averaged best response gap relative to the starting point, for a given iteration budget $\tau$, in percentage points. OMWU, OGDA and OMD all have one parameter, the learning rate $h$, whereas CMWU has in addition the Hessian coefficient $\epsilon$. We run experiments for a range of constant learning rates $h$, and pick the best for each algorithm and each game size. The complete set of results for all values of $h$ and score standard deviations are reported in the appendix, together with background on these algorithms, see also \cite{omwu}, \cite{merti}, \cite{ogda}. We find that OGDA performs best, but CMWU performs competitively and does well on a wider range of learning rates. Note that OMWU, CMWU and OMD all use softmax projection, but OGDA requires an orthogonal projection onto the unit simplex with respect to the $L_2$ norm, which requires vector sorting.

\begin{table}[!th]
\caption{Zero-sum bimatrix games: average score $\beta_{500}$ over 200 seeds, for various game sizes $n$. Best learning rate $h$ for each algorithm. $\epsilon=0.25 \cdot h^{-1}n$.}
\label{tab2}
\begin{center}
  \begin{tabular}{ccccc}
    \hline
    $n$ & CMWU & OGDA & OMD & OMWU \\
    25 & 8& \textbf{2}& 25& 26 \\
    50 & 7& \textbf{3}&23 &24  \\
    75 & 8&\textbf{2} &21 &24  \\
    100 & 8 & \textbf{2} & 17 &24  \\
    \hline
  \end{tabular}\end{center}
\end{table}
\vspace{3mm}
\textbf{Learning last-iterate convergence on general games.} We consider square games of size $n=m=10$. During training, for each RL episode of length $\tau=1000$, we sample randomly $A$, $B$, $(x^0,y^0)$ as previously discussed. We compute the 8 pure components $[(A_j,B_j)]_{j \in [1,8]}$ of corollary \ref{d8}, and consider a mixture $(\sum_{k=1}^3 w_k A_{j_k}, \sum_{k=1}^3 w_k B_{j_k})$ of 3 such components $j_k$ chosen at random ($j_k$'s are not necessarily distinct), with respective weights $w_k$ chosen at random in the probability simplex. We then apply the RL-based methodology in section \ref{rl} to learn the algorithm coefficients along trajectories.

We consider 3 versions of each algorithm: \textbf{\textit{Base}}, \textbf{\textit{Partial-RL}} and \textbf{\textit{Full-RL}}, all trained with the methodology described above. The RL policy in the \textbf{\textit{Base}} method learns the coefficients of the vanilla versions of the algorithms ($h$ for all algorithms, $\epsilon$ in addition for CMWU), and has its state $s_t$ masked, i.e. it doesn't know about trajectory information nor the game signature (cf. section \ref{rl}), and therefore outputs constant coefficients (chosen however to optimize performance during training across the spectrum of games). The \textbf{\textit{Partial-RL}} method is an improvement over the Base method in that the policy knows about trajectory information and the game signature. The \textbf{\textit{Full-RL}} method learns 4 coefficients for each algorithm. For CMWU, we learn $(h_1, h_2, \epsilon_1,\epsilon_2)$ as mentioned in section \ref{rl}. For OGDA and OMD, we learn separate coefficients for both players and both the mirror and regular steps. For OMWU, we learn different coefficients for the current and previous gradients $\nabla_t$, $\nabla_{t-1}$. 

We train policies with PPO \cite{ppo}. At test time, we evaluate on mixtures of $k=1$ to 4 pure components as reported in table \ref{tab1}, where the acronyms used are defined in corollary \ref{d8}, over 200 seeds (iteration budget $\tau$ is detailed in appendix). We construct these $k-$mixtures as previously discussed. Note that for clarity of presentation, we have grouped the symmetric and antisymmetric versions of the games on the same row, i.e. the row ZT will contain mixtures of ZST and ZAT, the row ZT + CCy mixtures of ZST, ZAT, CSCy, CACy. We find in table \ref{tab1} that RL-based methods perform well across the spectrum of games compared to the base counterparts, indicating that the policy is able to exploit trajectory information as well as the game signature. CMWU displays slightly best performance across all $k$-mixtures, notably on the challenging mixture of zero-sum and cooperative cyclic components ZCy + CCy, where other methods fail. There, we learn to apply low, sometimes negative learning rate, as well as high Hessian coefficient in order to "tame" the cyclicity. In figure \ref{ff3}, we illustrate this phenomenon by displaying last-iterate convergence $\frac{\delta(x^t,y^t)}{ \delta(x^0,y^0)}$ as a function of time $t$. To further confirm that the policy is able to exploit the signature, we conducted a small ablation study: we ran the Partial-RL CMWU method of table \ref{tab1} by zeroing-out the game signature, so that the policy input contains all trajectory information except from the signature. We get results in line with the Base method, where no information is known, which shows that the signature is crucial: specifically, ZCy + CCy and ZT + ZCy + CCy respectively decrease from 8 to 96, 19 to 70, hence we learn how to converge in these games by exploiting the signature.

\begin{figure}[th]
 \centering
  \centerline{\includegraphics[scale=0.5]{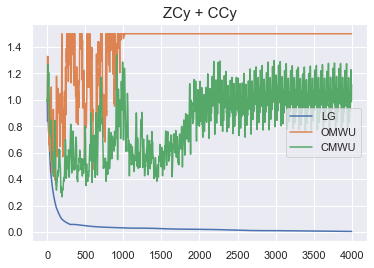}}
  \caption{\textcolor{darktangerine}{ZCy} + \textcolor{deepcerise}{CCy}: last-iterate convergence $t \to \frac{\delta(x^t,y^t)}{ \delta(x^0,y^0)}$. Full-RL CMWU learns to use negative gradient coefficient $h_k$ and high Hessian coefficient $|h_k\epsilon_k|$. OMWU and CMWU diverge.}
  \label{ff3}
\end{figure}

In figure \ref{ff2} we plot for Full-RL CMWU the mean value and pairwise correlations among the learnt coefficients across the 8 pure game components \footnote{we compute the correlation between the coefficients time-series for each seed, and average over seeds.}. We see that \textbf{(1)} for transitive games, our method learns to apply higher gradient coefficients $h_k$: this matches intuition since these games have decoupled payoff, hence players can act more intensely on their gradients. For cyclic games, it applies higher absolute Hessian to counter the cyclicity. \textbf{(2)} We see a \textbf{\col{mirror effect}} between zero-sum and cooperative games as suggested in \cite{chaos}: Hessian coefficients have opposite sign, which also reflects on the correlations. \textbf{(3)} The performance and learned coefficients for the symmetric and antisymmetric parts of a given game type are almost identical, which seems to indicate that this feature has little importance. \textbf{(4)} The correlation $\corr(\delta_1-\delta_2,h_1-h_2)$ is always significantly positive, $74\%$ in average. This hints at a possible "WoLF effect" learnt by the policy, similar to \cite{wolf}: when the best response gap of player 1 is low compared to player 2 (player 1 "winning"), then player 1 becomes more careful. We leave this interesting observation for future work.

In figure \ref{fff2} we compute, for the full-RL CMWU method, the average standardized per-episode-trajectory, i.e. "shape", of coefficients of gradient $G_k=h_k$ and Hessian $H_k=-h_k \epsilon_k $ of players $k=1,2$ across game types. We obtain these average trajectories as follows: for each seed, we consider the standardized time-series of coefficients \footnote{substract the mean, divide by standard deviation.}, and then average the latter over seeds. We see that typically, there is a transient regime where the coefficients change significantly, followed by a stationary regime where they stabilize (neighborhood of the Nash). In addition to the mirror effect between zero-sum and cooperative games discussed earlier in figure \ref{ff2}, we see a another such mirror effect where in the zero-sum case, all coefficients are higher in the transient regime than in the stationary regime, whereas in the cooperative case, both Hessian coefficients adopt an opposite behavior where they are lower in the transient regime.

\begin{figure}[ht]
  \centering
  \centerline{\includegraphics[width=1.\columnwidth]{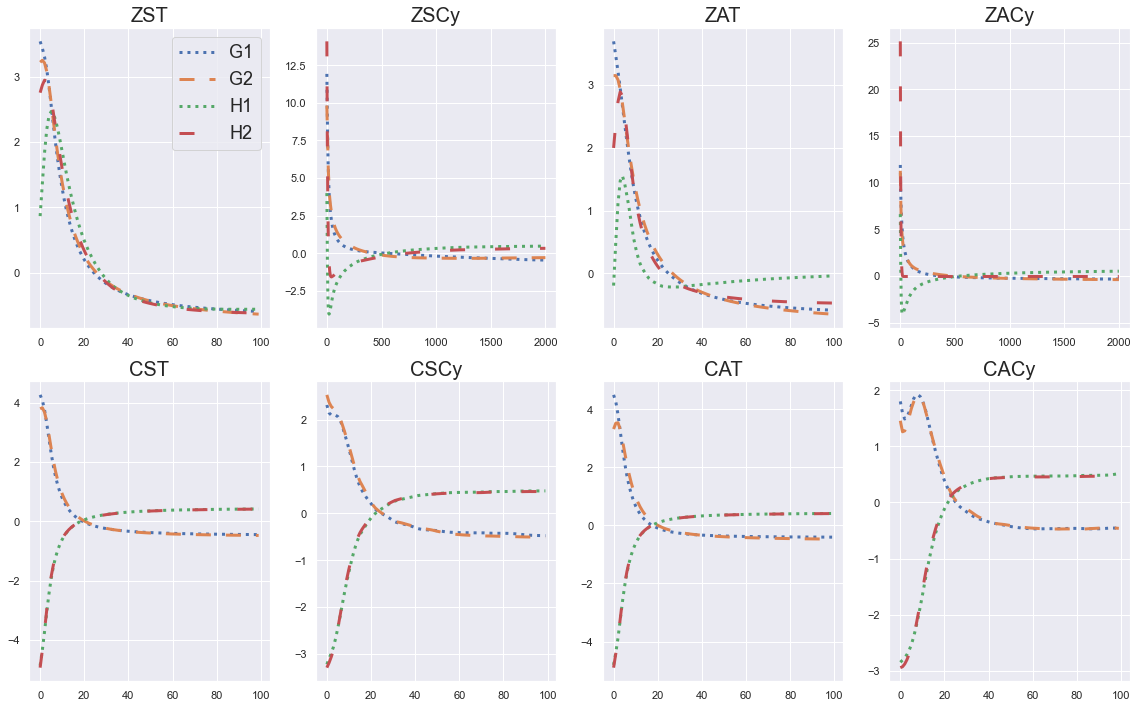}}
  \caption{Average standardized per-episode-trajectory ("shape") of coefficients of gradient $G_k=h_k$ and Hessian $H_k=-h_k \epsilon_k $ of players $k=1,2$ across game types, as a function of time $t$.}
  \label{fff2}
\end{figure}

 In figure \ref{fff21}, we used the standard KernelExplainer tool associated to the SHAP algorithm \cite{shap} to look at the contributions of 4 input feature groups ("BR": best response gaps $\delta_k$, "GRAD": gradient, "HESS": Hessian, "PAY": payoff value) to the gradient and Hessian coefficients $h_k$, $-h_k \epsilon_k$, which constitute the RL policy output. We see that in the transitive case, the gradient is the most important, but for cyclic games, we get a more varied behavior. For each of the 4 input groups, the percentage contribution displayed is the sum of the absolute SHAP contributions of its individual components: for example, for the best response input "BR", we sum the absolute contributions of the two player's best responses $\delta_1$ and $\delta_2$. We normalize SHAP contributions such that for each game type, they sum to one. Further details are provided in appendix, where we also display percentage contributions for the "signed" case where for each one of the 4 feature groups, we sum over the signed SHAP contributions of its components, rather than absolute.
 
 \begin{figure}[ht]
  \centering
  \centerline{\includegraphics[width=1.\columnwidth]{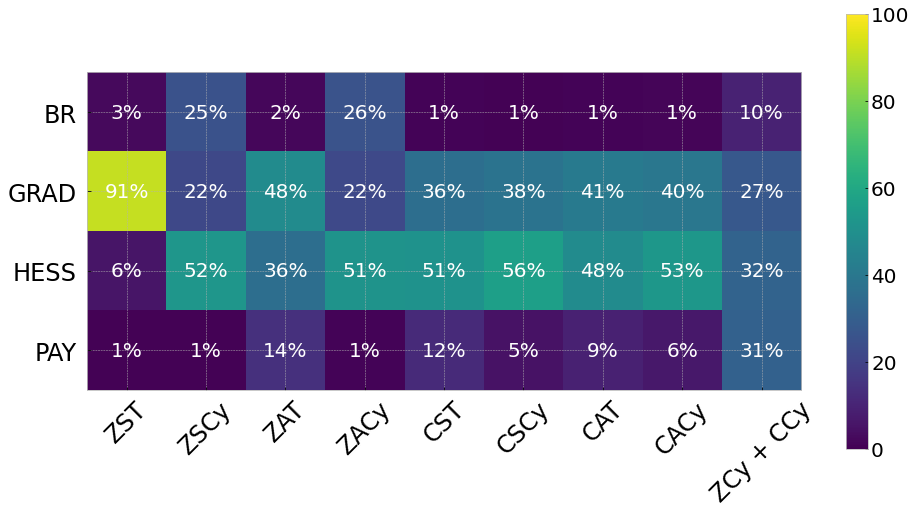}}
  \centerline{\includegraphics[width=1.\columnwidth]{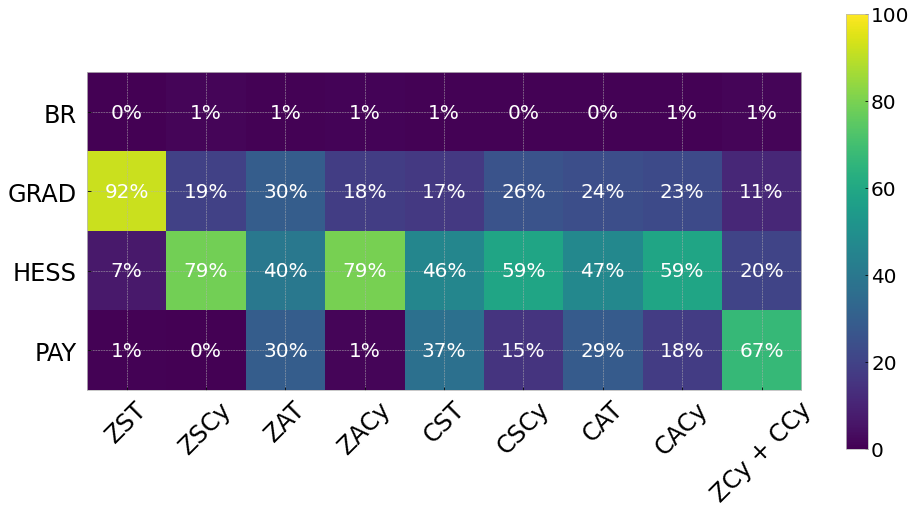}}
  \caption{Percentage contribution towards absolute SHAP importance aggregated across absolute values with respect to \textbf{(top)} gradient learning rates $G_k=h_k$ and \textbf{(bottom)} Hessian learning rates $H_k=-h_k \epsilon_k$ for players $k$ for the four input groups and eight unique game types. Columns are normalized to sum to one.}
  \label{fff21}
\end{figure}

\section{Conclusion and Future Research}
\label{secc}

We introduced a new algorithm, CMWU, together with a new framework for learning NE based on the nature of the game, materialized by our game signature and constructed using new game decompositions that generalize and unify game concepts studied in the recent works. We believe this work asks new research questions, among which we suggest \textbf{(1)} Extension of CMWU from zero-sum bimatrix to the convex-concave setting, as was done for OMWU \cite{omwu2}, and investigation of global convergence properties of CMWU. \textbf{(2)} Allowing the neural network to output the update steps $\Delta x$, $\Delta y$ directly as done in \cite{l2l,blo,l20mm}, as opposed to the update rule's coefficients. \textbf{(3)} Finding other relevant projectors in the space of 2P games. Can the game signature be learnt from experience? If yes, can the learnt features be mapped to new projectors?

\clearpage

\begin{figure*}[th]
\centering
  \centerline{\includegraphics[width=\textwidth]{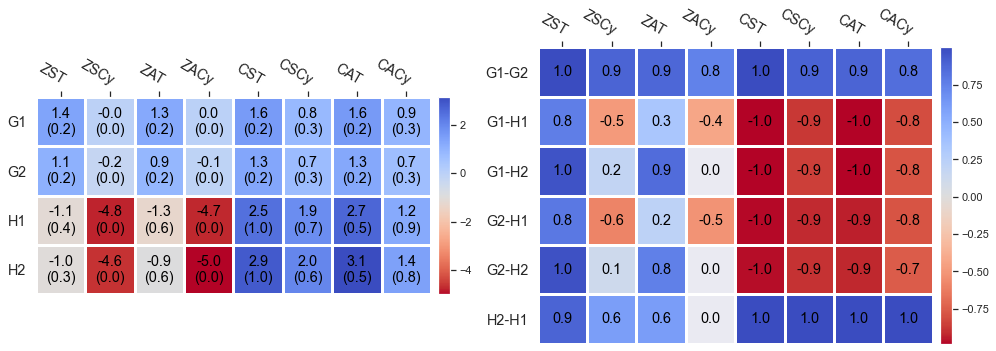}}
  \caption{Full-RL CMWU coefficients of gradient $G_k=h_k$ and Hessian $H_k=-h_k \epsilon_k $ of players $k=1,2$ across 8 pure game components. \textbf{(left)} Mean value. \textbf{(right)} Pairwise correlations. StDev in brackets.}
  \label{ff2}
\end{figure*}

\begin{table*}[!ht]
\caption{Learning Games for various algorithms and methods, cf. details in text. OMWU (\textbf{O}), CMWU (\textbf{C}), OGDA (\textbf{G}),  OMD (\textbf{M}). Full-RL (\textbf{F}), Partial-RL (\textbf{P}), Base (\textbf{B}). Average score $\beta_\tau$ over 200 seeds.}
\label{tab1}
\begin{center}
\begin{tabular}{lllllllllllll}
\hline
\makecell{Game Type} & \makecell{C-F} & \makecell{G-F} & \makecell{M-F} & \makecell{O-F} & \makecell{C-P} & \makecell{G-P} & \makecell{M-P} & \makecell{O-P} & \makecell{C-B} & \makecell{G-B} & \makecell{M-B} & \makecell{O-B}\\ 
\hline
\textcolor{hanpurple}{ZT}
&6&93&5&5&8&3&4&21&7&\textbf{1}&3&4  \\  
\textcolor{darktangerine}{ZCy}
& 4& 254 & 19 & 16 & \textbf{3} & 79 & \textbf{3} & 73 & 7 & 437 &\textbf{3}& 4  \\ 
 \textcolor{darkpastelgreen}{CT}
 & 4 & 119 & 5 & 5 & 5 & \textbf{1} & 2 & 14 & 9 & \textbf{1} & 2 & 4  \\  
\textcolor{deepcerise}{CCy}
&4&791&4&4&6&1&3&11&21&\textbf{0}&3&3 \\ 
\hline
\textbf{Avg. Mixtures of 1}
& 5 & 314 & 8 & 8 & 6 & 21 & \textbf{3} & 30 & 11 & 110 & \textbf{3} & 4 \\
\hline
\textcolor{hanpurple}{ZT} + \textcolor{darktangerine}{ZCy}
&\textbf{7}&89&10&17&\textbf{7}&16&42&40&29&122&53&47\\  
\textcolor{hanpurple}{ZT} + \textcolor{darkpastelgreen}{CT}
&3&91&3&3&4&1&2&11&6&\textbf{0}&1&2 \\ 
\textcolor{hanpurple}{ZT} + \textcolor{deepcerise}{CCy}
&6&289&4&4&7&\textbf{1}&3&13&16&\textbf{1}&3&4\\ 
 \textcolor{darktangerine}{ZCy} + \textcolor{darkpastelgreen}{CT}
 &15&101&\textbf{10}&15&16&16&26&39&11&147&25&16\\ 
 \textcolor{darktangerine}{ZCy} + \textcolor{deepcerise}{CCy}
 &10&322&30&58&\textbf{8}&25&120&104&116&137&123&119\\  
\textcolor{darkpastelgreen}{CT} + \textcolor{deepcerise}{CCy}
&5&275&5&5&6&\textbf{1}&3&13&12&\textbf{1}&3&4\\ 
\hline
\textbf{Avg. Mixtures of 2}
& \textbf{8} & 195 & 10 & 17 & \textbf{8} & 10 & 33 & 37 & 32 & 68 & 35 & 32\\
\hline
\textcolor{hanpurple}{ZT} + \textcolor{darktangerine}{ZCy} + \textcolor{darkpastelgreen}{CT}
& 12&39&9&18&13&\textbf{3}&38&35&22&40&39&32\\ 
\textcolor{hanpurple}{ZT} + \textcolor{darktangerine}{ZCy} + \textcolor{deepcerise}{CCy}
& 19&114&23&36&19&\textbf{13}&94&66&69&41&93&87\\ 
\textcolor{hanpurple}{ZT} + \textcolor{darkpastelgreen}{CT} + \textcolor{deepcerise}{CCy}
& 5&104&5&5&7&\textbf{1}&2&14&11&\textbf{1}&2&3\\ 
\textcolor{darktangerine}{ZCy} + \textcolor{darkpastelgreen}{CT} + \textcolor{deepcerise}{CCy}
& 20&122&18&32&28&\textbf{13}&84&61&58&33&78&72\\
\textcolor{hanpurple}{ZT} + \textcolor{darktangerine}{ZCy} + \textcolor{darkpastelgreen}{CT} + \textcolor{deepcerise}{CCy}
& 15&49&12&18&18&\textbf{5}&40&32&36&9&38&34\\
\hline
\textbf{Avg. Mixtures of 3-4}
& 14 & 86 & 13 & 22 & 17 & \textbf{7}&52 &42 &39 &25 & 50 &46\\
\textbf{Avg. $k$-Mixtures}
& \textbf{9}& 198 & 10 & 16 & 10 &13 & 29 & 36& 27& 68& 29 &27\\
\hline
\end{tabular}\end{center}
\end{table*}


\clearpage
\clearpage

\bibliography{neurips_2021}
\bibliographystyle{icml2022}

\newpage
\appendix
\onecolumn

\section*{Disclaimer}
This paper was prepared for information purposes by the Artificial Intelligence Research group of JPMorgan Chase \& Co and its affiliates (“JP Morgan”), and is not a product of the Research Department of JP Morgan. JP Morgan makes no representation and warranty whatsoever and disclaims all liability, for the completeness, accuracy or reliability of the information contained herein. This document is not intended as investment research or investment advice, or a recommendation, offer or solicitation for the purchase or sale of any security, financial instrument, financial product or service, or to be used in any way for evaluating the merits of participating in any transaction, and shall not constitute a solicitation under any jurisdiction or to any person, if such solicitation under such jurisdiction or to such person would be unlawful. \textsuperscript{\textcopyright} 2022 JPMorgan Chase \& Co. All rights reserved. 

\section{Background}
\label{seca1}
\textbf{Consensus optimization in the unconstrained case.} In the unconstrained case where players' pure strategy sets are given by $\mathcal{X}_1=\mathbb{R}^n$, $\mathcal{X}_2=\mathbb{R}^m$, and their payoffs $f_1$, $f_2$ $:\mathcal{X}_1 \times \mathcal{X}_2 \to \mathbb{R}$, we define the game gradient and Jacobian as follows \cite{dmg2}:
\begin{align*}
v:=\begin{pmatrix}\nabla_x f_1\\ \nabla_y f_2 \end{pmatrix}, \hspace{7mm}
H:=\begin{pmatrix}
\nabla^2_x f_1 & \nabla^2_{x,y} f_1 \\
\nabla^2_{x,y} f_2 & \nabla^2_{y} f_2
\end{pmatrix}
\end{align*}
The consensus optimization update rule takes the following form, for $x \in \mathcal{X}_1$, $y \in \mathcal{X}_2$ \cite{dmg2,co}:
\begin{align*}
\begin{pmatrix}
x \\
y
\end{pmatrix} \leftarrow 
\begin{pmatrix}
x \\
y
\end{pmatrix} + h v - h\epsilon H^Tv
\end{align*}
where $h$, $\epsilon$ are some parameters. In particular in the (unconstrained) bilinear case $f_1(x,y)=x^TAy$, $f_2(x,y)=x^TBy$ where $A$, $B$ are some matrices of size $n \times m$, we get:
\begin{align*}
\begin{pmatrix}
x \\
y
\end{pmatrix} \leftarrow 
\begin{pmatrix}
x \\
y
\end{pmatrix} + h 
\begin{pmatrix}
Ay \\
B^Tx
\end{pmatrix} 
- h\epsilon 
\begin{pmatrix}
0 & A  \\
B^T & 0
\end{pmatrix}^T \begin{pmatrix}
Ay \\
B^Tx
\end{pmatrix} 
= \begin{pmatrix}
x +hAy-h\epsilon BB^Tx\\
y + hB^Tx-h\epsilon A^TAy
\end{pmatrix}
\end{align*}

\textbf{OMWU.} For a constrained bimatrix game $(A,B)$ associated to players payoffs $x^TAy$, $x^TBy$, where the mixed strategies in the probability simplex $x \in \Delta_n$, $y \in \Delta_m$, the OMWU update rule \cite{chaos,omwu} is given by, for $i \in [1,n]$, $j \in [1,m]$:
\begin{align*}
    \begin{split}
    &x_i^{t+1}=  \frac{x_i^{t}\exp(2h[Ay^t]_i -h[Ay^{t-1}]_i)}{\sum_{k=1}^n x^t_k\exp(2h[Ay^t]_k -h[Ay^{t-1}]_k)}\\
    &y_j^{t+1}=  \frac{y_j^{t}\exp(2h[B^Tx^t]_j -h[B^Tx^{t-1}]_j)}{\sum_{k=1}^m y_k^{t}\exp(2h[B^Tx^t]_j -h[B^Tx^{t-1}]_j)}\ .
    \end{split}
\end{align*}

\textbf{OMD.} Using the same notations as OMWU, the OMD update is given by \cite{merti}:
\begin{align*}
    \begin{split}
    &\widehat{x}_i^{t}=  \frac{x_i^{t}\exp(h[Ay^t]_i)}{\sum_{k=1}^n x^t_k\exp(h[Ay^t]_k)}, \hspace{3mm}
    \widehat{y}_j^{t}=  \frac{y_j^{t}\exp(h[B^Tx^t]_j )}{\sum_{k=1}^m y_k^{t}\exp(h[B^Tx^t]_j )}\\
    &x_i^{t+1}=  \frac{x_i^{t}\exp(h[A\widehat{y}^t]_i)}{\sum_{k=1}^n x^t_k\exp(h[A\widehat{y}^t]_k)}, \hspace{3mm}
    y_j^{t+1}=  \frac{y_j^{t}\exp(h[B^T\widehat{x}^t]_j )}{\sum_{k=1}^m y_k^{t}\exp(h[B^T\widehat{x}^t]_j )}
    \end{split}
\end{align*}

\textbf{OGDA.} Let $\widehat{x}^{0}:=x^{0}$, $\widehat{y}^{0}:=y^{0}$, and $\Pi$ be the orthogonal projection onto the unit simplex with respect to the $L_2$ norm, i.e. $\Pi(x)=\argmin_{y \in \Delta_n} ||x-y||_2$. Using the same notations as OMWU, the OGDA update is given by \cite{ogda}: 
$$
x^{t+1} = \Pi \left[ \widehat{x}^{t}+h Ay^t \right], \hspace{3mm} y^{t+1} = \Pi \left[ \widehat{y}^{t}+h B^Tx^t \right]
$$
$$
\widehat{x}^{t+1} = \Pi \left[ \widehat{x}^{t}+h Ay^{t+1} \right], \hspace{3mm} \widehat{y}^{t+1} = \Pi \left[ \widehat{y}^{t}+h B^Tx^{t+1} \right]
$$

\section{Proofs}
\label{seca2}

\begin{proposition}
\label{ostro}
(Proposition 2.1 in \cite{omwu}). Let $\phi:\Delta \to \Delta$, with $\Delta:=\Delta_n \times \Delta_m$, and consider the recurrence relation $z^{t+1}=\phi(z^t)$. If the Jacobian of the update rule $\phi$ at a fixed point $z^*$ (i.e. $\phi(z^*)=z^*$) has spectral radius strictly less than one, then there exists a neighborhood $U$ around $z^*$ such that for all $z \in U$, the dynamics starting from $z$ converge to $z^*$, i.e. $\lim_{t \to \infty}\phi(z^t)=z^*$, $z^0=z$.
\end{proposition}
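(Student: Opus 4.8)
The statement is the classical Ostrowski local-stability theorem for fixed points of a differentiable iteration, so my plan is to linearize $\phi$ at $z^*$, exhibit a norm in which the linearization is a strict contraction, and then upgrade this local linear contraction to convergence of the full nonlinear iteration by iterating a one-step estimate. Concretely, I would set $J := D\phi(z^*)$, the Jacobian of $\phi$ at the fixed point, viewing $\phi$ as the restriction to $\Delta$ of a $C^1$ map defined on a neighborhood of $z^*$ in the affine hull of $\Delta$ (which is legitimate for the softmax-based updates at hand, as they extend smoothly). The hypothesis is that the spectral radius $\rho(J)$ satisfies $\rho(J) < 1$.

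The key technical ingredient is a norm-approximation lemma (Householder's lemma): for every square matrix $J$ and every $\eta > 0$ there exists a vector norm on the ambient space whose induced operator norm satisfies $\|J\| \le \rho(J) + \eta$. This follows by placing $J$ in Jordan (or Schur) form and conjugating by a suitable diagonal scaling to make the off-diagonal entries arbitrarily small. Since $\rho(J) < 1$, I choose $\eta$ with $\rho(J) + \eta < 1$ and obtain a norm $\|\cdot\|$ with $\|J\| =: \alpha_0 < 1$. Working throughout in this norm is what turns the abstract spectral-radius hypothesis into a usable contraction bound.

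Next I would invoke differentiability at $z^*$. Writing $\phi(z) - z^* = \phi(z) - \phi(z^*) = J(z - z^*) + r(z)$ with $\|r(z)\| = o(\|z - z^*\|)$, I pick $\delta > 0$ so small that $\|r(z)\| \le \eta' \|z - z^*\|$ on the closed ball $B_\delta := \{ z : \|z - z^*\| \le \delta \}$, where $\eta'$ is chosen so that $\alpha := \alpha_0 + \eta' < 1$. Then for every $z \in B_\delta$,
\[
\|\phi(z) - z^*\| \le \|J\|\,\|z - z^*\| + \|r(z)\| \le \alpha \|z - z^*\|.
\]
Because $\alpha < 1$, this shows $\phi(B_\delta) \subseteq B_\delta$, and since $\phi(\Delta) \subseteq \Delta$ by hypothesis, $\phi$ maps $U := B_\delta \cap \Delta$ into itself. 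Taking $U$ (relatively open around $z^*$) as the claimed neighborhood and iterating the displayed bound, a straightforward induction gives $z^t \in U$ and $\|z^t - z^*\| \le \alpha^t \|z^0 - z^*\|$ for all $t$, whence $z^t \to z^*$ geometrically, which is exactly the asserted convergence.

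The main obstacle is the norm-approximation lemma: the whole argument hinges on converting the spectral-radius condition $\rho(J) < 1$ into a genuine operator-norm contraction $\|J\| < 1$, which is not possible in a fixed a priori norm (the spectral radius is only the infimum of induced norms). Once that norm is in hand, the remaining linearization-plus-contraction steps are routine. A minor point to handle with care is that $\Delta$ is compact and not open, so I work with the relative topology and with derivatives along the affine hull of $\Delta$ rather than the full ambient Jacobian; the contraction estimate and the invariance of $U$ go through unchanged in that setting.
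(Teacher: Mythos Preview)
The paper does not supply its own proof of this proposition: it is simply quoted as Proposition~2.1 from \cite{omwu} and then invoked as a black box in the proof of Theorem~\ref{t1}. Your argument is the standard Ostrowski proof (Householder norm adapted to the spectral radius, first-order Taylor remainder, one-step contraction iterated), and it is correct; there is nothing in the paper to compare it against beyond the citation.
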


\begin{lemma}
\label{lemeig}
Let $S$ a real symmetric matrix for which 0 is an eigenvalue, and let $z$ a unit $L_2-$norm eigenvector associated to it. Let $\lambda_\epsilon$ and $z+z_\epsilon$ the perturbed eigenpair (possibly complex) associated to $S+\epsilon M$, where $M$ is a real matrix. Then we have, as $\epsilon \to 0$:
$$
\lambda_\epsilon = \epsilon z^TM z + o(\epsilon)
$$
\end{lemma}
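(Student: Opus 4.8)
The plan is to work directly from the eigen-equation for the perturbed pair and extract $\lambda_\epsilon$ by pairing against the unperturbed eigenvector $z$, exploiting that $S$ is symmetric so that $z$ is a left as well as a right null vector. Since $S$ is real symmetric and $0$ is a real eigenvalue, we may take $z \in \R^k$ real with $z^Tz=1$; the bilinear pairing $w \mapsto z^Tw$ is then well-defined on complex vectors and this is what we use (no conjugation), consistent with the claimed formula $\lambda_\epsilon = \epsilon z^TMz + o(\epsilon)$.

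\textbf{Key steps.} First, write the perturbed eigen-equation $(S+\epsilon M)(z+z_\epsilon) = \lambda_\epsilon (z+z_\epsilon)$ and expand using $Sz=0$ to get
$$
S z_\epsilon + \epsilon M(z+z_\epsilon) = \lambda_\epsilon (z+z_\epsilon).
$$
Second, left-multiply by $z^T$. Because $S$ is symmetric and $Sz=0$, we have $z^TS = (Sz)^T = 0$, so the term $z^TSz_\epsilon$ vanishes, leaving
$$
\epsilon\, z^TM(z+z_\epsilon) = \lambda_\epsilon\,(1 + z^Tz_\epsilon).
$$
Third, I would invoke the normalization freedom in the choice of the perturbed eigenvector: rescaling $z+z_\epsilon$ we may assume $z^T(z+z_\epsilon)=1$, i.e. $z^Tz_\epsilon=0$ (the perturbed eigenspace is not orthogonal to $z$ for small $\epsilon$, so this is harmless). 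Then the identity collapses to $\lambda_\epsilon = \epsilon\, z^TM(z+z_\epsilon) = \epsilon\, z^TMz + \epsilon\, z^TMz_\epsilon$. Finally, since $z_\epsilon \to 0$ as $\epsilon \to 0$ (the defining property of a perturbed eigenpair, i.e. continuity of the eigenvector along the perturbation), the last term is $\epsilon \cdot o(1) = o(\epsilon)$, giving $\lambda_\epsilon = \epsilon\, z^TMz + o(\epsilon)$. If one does not impose the normalization, one instead divides by $1 + z^Tz_\epsilon = 1 + o(1)$ and Taylor-expands, reaching the same conclusion.

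\textbf{Main obstacle.} The only non-mechanical point is the justification that $z_\epsilon = o(1)$ and that the normalization $z^T(z+z_\epsilon)=1$ can be arranged for all small $\epsilon$; both follow from continuity of the spectral data of $S+\epsilon M$ near $\epsilon=0$ (or may simply be read as part of what "the perturbed eigenpair associated to $S+\epsilon M$" means in the statement). Everything else is the short pairing computation above. Note that no assumption on the multiplicity of the eigenvalue $0$ is needed: the argument uses only $Sz=0$, $z^TS=0$, and $z^Tz=1$.
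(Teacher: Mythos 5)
Your proposal is correct and follows essentially the same route as the paper's proof: pair the perturbed eigen-equation against $z$, kill the $Sz_\epsilon$ term via symmetry ($z^TS=(Sz)^T=0$), and absorb the $1+z^Tz_\epsilon=1+o(1)$ factor using continuity of the perturbed eigenvector ($z_\epsilon=o(1)$). The paper divides by $1+o(1)$ rather than renormalizing, but that is the same calculation you give as your fallback.
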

\begin{proof}
The result is known in matrix perturbation theory, we recap the proof here for convenience. We have by definition $(S+\epsilon M)(z+z_\epsilon)=\lambda_\epsilon (z+z_\epsilon)$, hence:
\begin{align*}
\epsilon M z + S z_\epsilon + \epsilon M z_\epsilon = \lambda_\epsilon z + \lambda_\epsilon z_\epsilon 
\end{align*}
We have $z_\epsilon=o(1)$ as $\epsilon \to 0$ by continuity of the eigenvectors and eigenvalues with respect to $\epsilon$ \cite{bhatia}, hence $\epsilon M z_\epsilon=o(\epsilon)$. By symmetry of $S$ we have $z^TS=(Sz)^T=0$, so multiplying by $z^T$ on the left yields:
\begin{align*}
 & \epsilon z^TM z  = \lambda_\epsilon  + \lambda_\epsilon \left<z_\epsilon,z\right> + o(\epsilon)\\
 \Rightarrow & \hspace{2mm}\frac{\lambda_\epsilon}{\epsilon} =  \frac{1}{1+o(1)} z^TM z + o(1) 
\end{align*}
Taking the limit yields $\lim_{\epsilon \to 0} \frac{\lambda_\epsilon}{\epsilon} = z^TM z$, hence $\lambda_\epsilon=\epsilon z^TM z +o(\epsilon)$. 
\end{proof}

\begin{theorem}
\col{\textbf{(local convergence of CMWU)}} Let $(x^*,y^*)$ a quasi-strict NE of the zero-sum game $A$. Let $A_*$ the payoff matrix whose rows have been restricted to $\supp(x^*)$ and columns to $\supp(y^*)$. Assume that $A_*$ (resp. $A_*^T$) is weakly $\mathcal{Z}_{|\supp(y^*)|}-$invertible (resp. $\mathcal{Z}_{|\supp(x^*)|}-$invertible). Then, for sufficiently small $h$ and $\epsilon=C h^{-\delta}$ satisfying \footnote{we use the convention $|v_A^{-1}|:=+\infty$ if $v_A=0$.} $\epsilon<|v_A^{-1}|$ for some $\delta \in [0,1)$, $C>0$ not depending on $h$, there exists a neighborhood $U(h) \subset \Delta_n \times \Delta_m$ of $(x^*,y^*)$ such that for all $(x^0,y^0) \in U(h)$, the CMWU iterates (1) converge to the NE, namely $\lim_{t \to \infty} (\varphi_1(x^t,y^t),\varphi_2(x^t,y^t))=(x^*,y^*)$.
\end{theorem}

\begin{proof} The proof aims at applying the well-know proposition \ref{ostro}, as in the OMWU case \cite{omwu}. There are 2 key differences in the proof of CMWU vs.\ OMWU: (i) since the update rule is different, we need a different proof to ensure that any NE is a fixed point of the update rule: this is proved in proposition~1. (ii) regarding the eigenvalue analysis, OMWU attacks the antidiagonal part of the Jacobian. As stated in \cite{omwu2}, which proves local convergence of OMWU in the convex-concave zero-sum setting: \textit{"The key result of spectral analysis in \cite{omwu} is in Lemma B.6 which states that an skew-symmetric matrix (i.e. antisymmetric $A^T=-A$) has imaginary eigenvalues.} In our CMWU case, we will attack the diagonal part of the Jacobian, which does not exploit skew-symmetry and thus necessitates a different proof technique. Specifically, we will see the problem as an eigenvalue perturbation problem, namely the perturbation of the zero eigenvalue of the "all ones" matrix $J_n$. The remainder of the proof aims at computing the Jacobian and analyzing its eigenvalues.

\textbf{\col{Step 1.}} By proposition 1, the NE is a fixed point of the update rule, it thus remains to show, according to proposition \ref{ostro}, that the eigenvalues of the game Jacobian at equilibrium $H=H(x^*,y^*)$ have modulus strictly less then one, where $H$ is defined as:
$$
H:= 
\begin{pmatrix}
\frac{\partial^2 \varphi_1}{\partial x^2} & \frac{\partial^2 \varphi_1}{\partial x \partial y} \\
\frac{\partial^2 \varphi_2}{\partial x \partial y} & \frac{\partial^2 \varphi_2}{\partial y^2}
\end{pmatrix}
$$

Let us first compute $H$ directly. We get, denoting $\delta_{ij}$ the Kroenecker delta ($\delta_{ij}=1$ if $i=j$, $\delta_{ij}=0$ otherwise):
\begin{align*}
\frac{\partial [\varphi_1]_i}{\partial x_j}(x^*,y^*)=\delta_{ij} \gamma_i+x^*_i \gamma_i \left[ -h \epsilon H_{y^*}[i,j] -\gamma_j+h\epsilon \sum_{k=1}^n \gamma_k x^*_k H_{y^*}[k,j]\right], \hspace{2mm} i,j \in [1,n]
\end{align*}
\begin{align*}
\frac{\partial [\varphi_1]_i}{\partial y_j}(x^*,y^*)=&x^*_i \gamma_i \left[ h A_{ij}-h\epsilon A_{ij}[A^Tx^*]_j\right.\\&\left.+h\sum_{k=1}^n x^*_k \gamma_k (-A_{kj}+\epsilon A_{kj}[A^Tx^*]_j)\right], \hspace{2mm} i \in [1,n], j \in [1,m]    
\end{align*}
\begin{align*}
\gamma_i = \frac{\exp(h[Ay^*]_i - h\epsilon [H_{y^*}x^*]_i)}{\sum_{k=1}^n x^*_k\exp(h[Ay^*]_k - h\epsilon [H_{y^*} x^*]_k)}, \hspace{2mm} i \in [1,n]
\end{align*}

$$
\frac{\partial [\varphi_2]_i}{\partial y_j}(x^*,y^*)=\delta_{ij} \eta_i+y^*_i \eta_i \left[ -h \epsilon H_{x^*}[i,j] -\eta_j+h\epsilon \sum_{k=1}^m \eta_k y^*_k H_{x^*}[k,j]\right], \hspace{2mm} i,j \in [1,m]
$$
\begin{align*}
\frac{\partial [\varphi_2]_i}{\partial x_j}(x^*,y^*)=&y^*_i \eta_i \left[ -h A_{ji}-h\epsilon A_{ji}[Ay^*]_j\right.\\&\left.+h\sum_{k=1}^n y^*_k \eta_k (A_{jk}+\epsilon A_{jk}[Ay^*]_j)\right], \hspace{2mm} i \in [1,m], j \in [1,n]
\end{align*}
$$
\eta_i = \frac{\exp(-h[A^Tx^*]_i - h\epsilon [H_{x^*}y^*]_i)}{\sum_{k=1}^m y^*_k\exp(-h[A^Tx^*]_k - h\epsilon [H_{x^*} y^*]_k)}, \hspace{2mm} i \in [1,m]
$$

\textbf{\col{Step 2.}} Eigenspaces are invariant with respect to Jacobian row and column permutations, so we are free to rearrange $H$ as desired. As in \cite{omwu}, we first group together equilibrium coordinates outside of the support, i.e. the indexes $i$ and $j$ such that $x_i^*=0$ or $y_j^*=0$. We denote $\supp(z)$ the support of a vector $z$, namely the indexes corresponding to its nonzero coordinates. Take $i \notin \supp(x^*)$, $j \notin \supp(y^*)$. In this case, using the partial derivatives computed in step 1, all partial derivatives are zero except the diagonal terms:
$$
\frac{\partial [\varphi_1]_i}{\partial x_i}(x^*,y^*)=\gamma_i \hspace{10mm}
\frac{\partial [\varphi_2]_j}{\partial y_j}(x^*,y^*)=\eta_j
$$
We use the NE property that $x_k^*\neq 0 \Rightarrow [Ay^*]_k=v_A$, and $y_k^*\neq 0 \Rightarrow [A^Tx^*]_k=v_A$ \cite{omwu}. This yields, since the sum in the denominator below is equal to the same sum restricted to $x_k^* \neq 0$ (as we multiply by $x_k^*$): 
$$
\gamma_i = \frac{\exp(h[Ay^*]_i - h\epsilon [H_{y^*}x^*]_i)}{\sum_{k=1}^n x^*_k\exp(hv_A - h\epsilon v_A^2)}=\exp(h[Ay^*]_i-hv_A+h\epsilon v_A^2 - h\epsilon [H_{y^*}x^*]_i)
$$
where we have used that $[H_{y^*} x^*]_k=v_A^2$ when $x^*_k \neq 0$, cf. proof of proposition 1. Proof of proposition 1 also yields $[H_{y^*}x^*]_i=[A y^*]_i v_A$ (remember $x_i^* = 0$). This yields:
$$
\gamma_i =\exp(h[Ay^*]_i-hv_A+h\epsilon v_A^2 - v_A h\epsilon [A y^*]_i)=\exp(-h(v_A-[Ay^*]_i)(1-\epsilon v_A))
$$
and therefore $|\gamma_i|<1$ when $(v_A-[Ay^*]_i)(1-\epsilon v_A)>0$. Since the NE $(x^*,y^*)$ is quasi-strict by assumption, $v_A-[Ay^*]_i>0$, hence the condition is true when $1-\epsilon v_A>0$, i.e. $\epsilon v_A<1$. Similarly, we get:
$$
\eta_j=\exp(-h([A^Tx^*]_j-v_A)(1+\epsilon v_A))
$$
By quasi-strict property of the NE, we have $[A^Tx^*]_j-v_A>0$, and so $|\eta_j|<1$ when $1+\epsilon v_A>0$, i.e. $\epsilon v_A>-1$. Both requirements $\epsilon v_A<1$, $\epsilon v_A>-1$ yield $|\epsilon v_A|<1$, which is true by assumption since we have assumed $0<\epsilon <|v_A^{-1}|$ (with the convention $|v_A^{-1}|:=+\infty$ if $v_A=0$).

\textbf{\col{Step 3.}} We now consider the remaining coordinates, i.e. indexes $i$ and $j$ such that both $x_i^*\neq 0$ and $y_j^*\neq 0$. The partial derivatives of step 1 can be simplified: we use again the NE property that $x_i^*\neq 0 \Rightarrow [Ay^*]_i=v_A$, and $y_j^*\neq 0 \Rightarrow [A^Tx^*]_j=v_A$, which yield that all $\gamma$ and $\eta$ terms in these equations are equal to 1. Specifically for $i,j \in \supp(x^*)$:
$$
\frac{\partial [\varphi_1]_i}{\partial x_j}(x^*,y^*)=\delta_{ij} +x^*_i \left[ -h \epsilon H_{y^*}[i,j] -1+h\epsilon \sum_{k=1}^n x^*_k H_{y^*}[k,j]\right] \hspace{2mm} 
$$
$$
=\delta_{ij} +x^*_i \left[ -h \epsilon H_{y^*}[i,j] -1+h\epsilon v_A^2\right]
$$
for $i \in \supp(x^*)$, $j \in \supp(y^*)$:
\begin{align*}
\frac{\partial [\varphi_1]_i}{\partial y_j}(x^*,y^*)&=x^*_i \left[ h A_{ij}-h\epsilon A_{ij}v_A+h\sum_{k=1}^n x^*_k (-A_{kj}+\epsilon A_{kj}v_A)\right]\\
&=x^*_i \left[ h A_{ij}-h\epsilon A_{ij}v_A-hv_A+h \epsilon v_A^2\right] 
\end{align*}
for $i,j \in \supp(y^*)$
$$
\frac{\partial [\varphi_2]_i}{\partial y_j}(x^*,y^*)=\delta_{ij} +y^*_i \left[ -h \epsilon H_{x^*}[i,j] -1+h\epsilon \sum_{k=1}^m y^*_k H_{x^*}[k,j]\right]
$$
$$
=\delta_{ij} +y^*_i \left[ -h \epsilon H_{x^*}[i,j] -1+h\epsilon v_A^2\right]
$$
for $i \in \supp(y^*)$, $j \in \supp(x^*)$:
\begin{align*}
\frac{\partial [\varphi_2]_i}{\partial x_j}(x^*,y^*)=&y^*_i \left[ -h A_{ji}-h\epsilon A_{ji}v_A+h\sum_{k=1}^n y^*_k (A_{jk}+\epsilon A_{jk}v_A)\right]\\
&=y^*_i \left[ -h A_{ji}-h\epsilon A_{ji}v_A+hv_A+h \epsilon v_A^2\right]
\end{align*}
The above expressions can be written in compact matrix forms. In the following, in order to avoid notational burden we will use the same notations for the original matrices and their support restricted counterparts, but in the expressions below, the upper-left, upper-right, lower-left and lower-right matrix blocks have sizes respectively: $|\supp(x^*)|^2$, $|\supp(x^*)| \times |\supp(y^*)|$, $|\supp(y^*)| \times |\supp(x^*)|$, $|\supp(y^*)|^2$. Denoting $J$ as the all ones matrix, we get:
$$
H= I -  D \left[D_J
+  h\epsilon S
- h \epsilon v_A^2 \widetilde{J}
 + hB
\right], \hspace{3mm}
\widetilde{J}:=\begin{pmatrix}
J & J \\
J & J
\end{pmatrix}, \hspace{3mm}
D:=
\begin{pmatrix}
D_{x^*} & 0 \\
0 & D_{y^*}
\end{pmatrix}
$$
$$
S := \begin{pmatrix}
H_{y^*} & v_A A_* \\
v_A A_*^T  & H_{x^*}
\end{pmatrix}, \hspace{3mm}
D_J:=\begin{pmatrix}
J & 0 \\
0 & J
\end{pmatrix}, \hspace{3mm}
B:=\begin{pmatrix}
0 & -A_*+v_AJ \\
A_*^T-v_AJ & 0
\end{pmatrix}
$$

\textbf{\col{Step 4.}} Remember that $D$ has strictly positive diagonal elements due to the support restriction: $D_{x^*}$, $D_{y^*}$ are of sizes $|\supp(x^*)|^2$, $|\supp(y^*)|^2$. Since the sum of $x^*$ and $y^*$ are both 1, eigenvalues of $D D_J$ are $1$ (multiplicity 2) and $0$ (multiplicity $|\supp(x^*)| + |\supp(y^*)|-2$). When $h=0$, the zero eigenvalue of $D D_J$ is causing $H$ to have eigenvalue $1$, so for small $h$, our task is to show that the perturbed eigenvalue $\lambda_h$ away from zero has strictly positive real part, which will make the modulus of the corresponding eigenvalue of $H$ strictly less than one (see corollary 9 in \cite{dmg}). Observe that the zero eigenvalue of $D D_J$ has same eigenvectors as the zero eigenvalue of $D_J$ since $D$ is invertible. $D_J$ is symmetric and bloc diagonal with $J$ on the diagonal, so these eigenvectors are orthogonal and of the form $z:=(z_1,z_2)$ of size $|\supp(x^*)| \times |\supp(y^*)|$ with one of $z_1$, $z_2$ an eigenvector associated to the zero eigenvalue of $J$, and the other 0. Note that the eigenspace $\mathcal{Z}_n$ associated to the zero eigenvalue of $J_n$ is of dimension $n-1$ and consists of vectors whose sum is zero.

\textbf{\col{Step 5.}} Let $\lambda_h$ a perturbation of the 0 eigenvalue of $D_J$ for small $h$, associated to the perturbation $D_J+h\epsilon M+hB$, with $M:=S-v_A^2 \widetilde{J}$. We use lemma \ref{lemeig} (since $D_J$ is symmetric) to get that $\lambda_h = h\epsilon z^T M z + h z^T B z + o(h\epsilon) + o(h)$. By assumption, $\epsilon=C h^{-\delta}$ with $\delta \in [0,1)$, so the term $o(h)$  is $o(h\epsilon)$ and $h\epsilon=o(1)$, hence $\lambda_h = h\epsilon z^T M z + h z^T B z + o(h\epsilon)$. Using the fact that one of $z_1$, $z_2$ is an eigenvector of $J$, and the other 0, we get $z^T \widetilde{J} z=0$. Since $B$ has zero diagonal blocs, its antidiagonal blocs will be multiplied by $z_1$ on one side and $z_2$ on the other side, which yield $z^T B z=0$ since either $z_1=0$ or $z_2=0$. Finally we get $z^T S z = z_1^T H_{y^*} z_1 + z_2^T H_{x^*} z_2$, since its antidiagonal blocs will be multiplied by 0 for the same reason as for $B$.

\textbf{\col{Step 6.}} We have $z_1^T H_{y^*} z_1 = z_1^T A_* D_{y^*}^{\frac{1}{2}} D_{y^*}^{\frac{1}{2}} A_*^T z_1 = ||D_{y^*}^{\frac{1}{2}} A_*^T z_1||^2$. Notice that $||D_{y^*}^{\frac{1}{2}} A_*^T z_1||=0$ if and only if $A_*^T z_1=0$ as $D_{y^*}$ is of size $|\supp(y^*)|^2$ has been restricted to rows and columns in the support of $y^*$. Similarly $||D_{x^*}^{\frac{1}{2}} A_* z_2||=0$ if and only if $A_* z_2=0$. By assumption, $A_*^T$ is weakly $\mathcal{Z}_{|\supp(x^*)|}-$invertible, and $A_*$ is weakly $\mathcal{Z}_{|\supp(y^*)|}-$invertible, hence we always have either $A_* z_2 \neq 0$ or $A_*^Tz_1\neq 0$ since one of $z_1$, $z_2$ is zero and the other one is in $\mathcal{Z}$ and non zero as an eigenvector associated to the zero eigenvalue of $J$. So overall $\lambda_h=h\epsilon \lambda + o(h\epsilon)$ for some $\lambda>0$. Scaling on the left the matrix $D_J$ by $D$ doesn't change the sign of $\lambda$ since we are working on the support of $x^*$, $y^*$. This concludes the proof that for $h$ sufficiently small, the perturbed 0 eigenvalue of $D D_J$ will have strictly positive real part. 

\end{proof}

\begin{lemma}
\label{pla}
If $p_1$, $p_2$ are commutative projectors on $\mathcal{G}$, then $p_1 p_2$ is a projector and:
$$
\mathcal{G}=(K_{p_1} \cap K_{p_2}) \oplus (K_{p_1} \cap R_{p_2}) \oplus (R_{p_1} \cap K_{p_2}) \oplus (R_{p_1} \cap R_{p_2})\ .
$$
\end{lemma}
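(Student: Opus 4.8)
The plan is to handle the two assertions in turn. Idempotence of $p_1 p_2$ is immediate from commutativity together with $p_i^2 = p_i$, since $(p_1 p_2)^2 = p_1 p_2 p_1 p_2 = p_1^2 p_2^2 = p_1 p_2$; the same computation shows $p_2 p_1 = p_1 p_2$ is a projector as well.

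For the decomposition, the idea is to iterate the two-term splitting of Definition~\ref{proj}. The crucial observation is that commutativity forces $p_2$ to preserve both $K_{p_1}$ and $R_{p_1}$: if $p_1 g = 0$ then $p_1(p_2 g) = p_2(p_1 g) = 0$, so $p_2 g \in K_{p_1}$; and if $p_1 g = g$ then $p_1(p_2 g) = p_2(p_1 g) = p_2 g$, so $p_2 g \in R_{p_1}$. Hence $p_2$ restricts to a projector on each of $K_{p_1}$ and $R_{p_1}$, and the kernel and range of these restrictions are exactly $K_{p_1}\cap K_{p_2}$, $K_{p_1}\cap R_{p_2}$ on $K_{p_1}$, and $R_{p_1}\cap K_{p_2}$, $R_{p_1}\cap R_{p_2}$ on $R_{p_1}$. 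Applying Definition~\ref{proj} to $p_1$ gives $\mathcal{G} = K_{p_1}\oplus R_{p_1}$; applying it to the two restricted projectors gives $K_{p_1} = (K_{p_1}\cap K_{p_2})\oplus (K_{p_1}\cap R_{p_2})$ and $R_{p_1} = (R_{p_1}\cap K_{p_2})\oplus (R_{p_1}\cap R_{p_2})$. Substituting the latter two into the former yields the desired four-term direct sum.

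An equivalent, more symmetric route that I would mention as an alternative is to verify by direct computation that $p_1 p_2$, $(\id-p_1)p_2$, $p_1(\id-p_2)$ and $(\id-p_1)(\id-p_2)$ are mutually orthogonal idempotents summing to $\id$ (routine bookkeeping using only commutativity and $p_i^2 = p_i$), and to identify their ranges with the four intersection subspaces; this delivers existence and uniqueness of the decomposition simultaneously, since applying each idempotent to a relation $\sum_i f_i = 0$ with $f_i$ in the $i$-th summand returns $f_i = 0$.

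I do not anticipate a genuine obstacle. The only step that truly invokes the hypothesis — and the only point at which the statement would fail without it — is the invariance of $K_{p_1}$ and $R_{p_1}$ under $p_2$ (equivalently, the mutual orthogonality of the four idempotents above), so that is where I would take the most care.
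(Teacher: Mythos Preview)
Your proof is correct and follows essentially the same route as the paper: show $p_1p_2$ is idempotent from commutativity, use commutativity to establish that $K_{p_1}$ and $R_{p_1}$ are $p_2$-invariant, and then split each via the restricted projector to obtain the four-term direct sum. The paper's write-up is slightly more elaborate (it states the invariance and the subspace-splitting as separate ``if and only if'' lemmas before combining them), and your alternative idempotent-partition argument is a nice bonus, but the underlying strategy is the same.
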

\begin{proof}
The result is well-known in linear algebra. We recap the proof for convenience. \textbf{Step 1.} Let $u:\mathcal{G} \to \mathcal{G}$ and $q$ a projector. Then $u$ and $q$ commute if and only if $K_q$ and $R_q$ are stable by $u$, i.e. $uK_q \subseteq K_q$, $uR_q \subseteq R_q$. Indeed, assume $u$ and $q$ commute and $x \in K_q$, $y \in R_q$. Then, $qux=uqx=0$, hence $ux \in K_q$. By definition 5, $K_q=R_{\id-q}$, but $\id-q$ is a projector since it is linear and $(\id-q)(\id-q)=\id-2q+q^2=\id-q$. So, $R_q=K_{\id-q}$, and $(\id-q)uy=u(\id-q)y=0$ since $y\in R_q=K_{\id-q}$. Now, assume the reverse. Let $z \in \mathcal{G}$. We want to show $uqz=quz$. By definition 5, $z=x+y$ with $x \in K_q$, $y \in R_q$, so $uqz=uqy$ and $quz=qux+quy$. By assumption $ux \in K_q$, so $qux=0$. So it remains to show $uqy=quy$. $y\in R_q$, so $y=qx_0$ for some $x_0 \in \mathcal{G}$, so $uqy=uqx_0$. Since $uy \in R_q$ by assumption, $quy=uy=uqx_0$, which shows $uqy=quy$.

\textbf{Step 2.} Let $\mathcal{K}$ a subspace of $\mathcal{G}$ and $q$ a projector. We show that $\mathcal{K}$ is stable by $q$ if and only if $\mathcal{K}=(\mathcal{K} \cap K_q) \oplus (\mathcal{K} \cap R_q)$. Indeed, assume $\mathcal{K}=(\mathcal{K} \cap K_q) \oplus (\mathcal{K} \cap R_q)$ and $z \in \mathcal{K}$. We want to show $qz \in \mathcal{K}$. By assumption $z=x+y$, with $x \in \mathcal{K} \cap K_q$, $y \in \mathcal{K} \cap R_q$. So $qz=qx+qy=qy=y \in \mathcal{K}$. Now assume the reverse, i.e. $\mathcal{K}$ is stable by $q$, and let $H:=(\mathcal{K} \cap K_q) + (\mathcal{K} \cap R_q)$. Since $q$ projector, by definition 5 and property of the direct sum we have $K_q \cap R_q=\{0\}$, so the sum is a direct sum: $H=(\mathcal{K} \cap K_q) \oplus (\mathcal{K} \cap R_q)$. The inclusion $H \subseteq \mathcal{K}$ is trivial, so it remains to show $\mathcal{K} \subseteq H$. Let $z \in \mathcal{K}$. Then $z=x+y$ with $x \in K_q$, $y \in R_q$, so $qz=qy=y$. Since $qz \in \mathcal{K}$ by assumption, then $y \in \mathcal{K}$, so $y \in \mathcal{K} \cap R_q$. But $x=z-y$, so also $x \in \mathcal{K}$, and so $x \in \mathcal{K} \cap K_q$, which proves that $z \in H$.

\textbf{Step 3.} We have $\mathcal{G}=K_{p_1} \oplus R_{p_1}$ by definition 5. By step 1, $K_{p_1}$ and $R_{p_1}$ are stable by $p_2$. By step 2, $K_{p_1}=(K_{p_1} \cap K_{p_2}) \oplus (K_{p_1} \cap R_{p_2})$ and $R_{p_1}=(R_{p_1} \cap K_{p_2}) \oplus (R_{p_1} \cap R_{p_2})$, which completes the proof.
\end{proof}

\begin{theorem}
\label{decompa}
\col{\textbf{(Game Decomposition)}}
Let $(\rho_i)_{i \in [1,n]}$ a family of $n$ commutative projectors on $\mathcal{G}$. Then we have the canonical direct sum decomposition:
\begin{align*}
    \mathcal{G}= \bigoplus_{\mathcal{C}_i \in \{K_{\rho_i},R_{\rho_i}\}} \bigcap_{i=1}^n \mathcal{C}_i\ .
\end{align*}
In particular, in the case $n=3$ we have:
\begin{align*}
\begin{split}
\mathcal{G}= (K_{\rho_1} \cap K_{\rho_2} \cap K_{\rho_3}) & \oplus  (K_{\rho_1} \cap K_{\rho_2} \cap R_{\rho_3}) \oplus  (K_{\rho_1} \cap R_{\rho_2} \cap K_{\rho_3}) \oplus  (K_{\rho_1} \cap R_{\rho_2} \cap R_{\rho_3})\\ 
\oplus (R_{\rho_1} \cap K_{\rho_2} \cap K_{\rho_3}) & \oplus  (R_{\rho_1} \cap K_{\rho_2} \cap R_{\rho_3}) \oplus  (R_{\rho_1} \cap R_{\rho_2} \cap K_{\rho_3}) \oplus  (R_{\rho_1} \cap R_{\rho_2} \cap R_{\rho_3})\ .
\end{split}
\end{align*}
\end{theorem}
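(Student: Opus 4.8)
The plan is to prove the decomposition by induction on $n$, using Lemma \ref{pla} (the case $n=2$) as the engine of the inductive step, together with the two structural facts that are actually established inside the proof of Lemma \ref{pla}: first, that a linear map commutes with a projector $q$ if and only if both $K_q$ and $R_q$ are stable under it; and second, that a subspace $\mathcal{K} \subseteq \mathcal{G}$ is stable under $q$ if and only if $\mathcal{K} = (\mathcal{K} \cap K_q) \oplus (\mathcal{K} \cap R_q)$. For the base case $n=1$, the claim is exactly $\mathcal{G} = K_{\rho_1} \oplus R_{\rho_1}$, which is Definition \ref{proj}.

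For the inductive step, suppose the decomposition holds for any family of $n$ commutative projectors, and let $\rho_1, \dots, \rho_{n+1}$ be $n+1$ pairwise commuting projectors on $\mathcal{G}$. Applying the inductive hypothesis to $\rho_1, \dots, \rho_n$ gives the direct sum $\mathcal{G} = \bigoplus_{\mathcal{C}_i \in \{K_{\rho_i}, R_{\rho_i}\}} \bigcap_{i=1}^n \mathcal{C}_i$ into $2^n$ subspaces. I would then fix one such summand $\mathcal{K} := \bigcap_{i=1}^n \mathcal{C}_i$ and show it is stable under $\rho_{n+1}$: since $\rho_{n+1}$ commutes with each $\rho_i$, Step 1 of the proof of Lemma \ref{pla} shows that each $K_{\rho_i}$ and each $R_{\rho_i}$ is stable under $\rho_{n+1}$, and an intersection of $\rho_{n+1}$-stable subspaces is again $\rho_{n+1}$-stable. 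Step 2 of that same proof, now applied with $q = \rho_{n+1}$ and the subspace $\mathcal{K}$, then yields $\mathcal{K} = (\mathcal{K} \cap K_{\rho_{n+1}}) \oplus (\mathcal{K} \cap R_{\rho_{n+1}})$.

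Finally I would reassemble: substituting this two-term decomposition of each of the $2^n$ summands $\mathcal{K}$ back into the direct sum for $\mathcal{G}$ produces $2^{n+1}$ subspaces, each of the form $\bigcap_{i=1}^{n+1} \mathcal{C}_i$ with $\mathcal{C}_i \in \{K_{\rho_i}, R_{\rho_i}\}$, whose sum is all of $\mathcal{G}$. That this refined sum is still direct follows from the standard fact that splitting each summand of a direct sum into a direct sum of subspaces again produces a direct sum: any nontrivial linear dependence among the $2^{n+1}$ pieces, grouped according to which of the original $2^n$ summands each piece sits in, would force either a dependence among the outer summands (contradicting the inductive hypothesis) or a dependence inside one of the inner two-term splittings (contradicting Step 2). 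The explicit $n=3$ expansion in the statement is simply this conclusion with the eight intersections written out.

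The main obstacle is the bookkeeping in the inductive step — precisely the verification that commutativity of $\rho_{n+1}$ with all the earlier $\rho_i$ propagates to stability of the intersection $\mathcal{K}$ under $\rho_{n+1}$, which is what makes Step 2 of Lemma \ref{pla} applicable to $\mathcal{K}$. Once that is in place, the rest is routine: the directness of the refined sum is a purely formal consequence of the directness of the outer sum and of each inner split, and requires no further computation.
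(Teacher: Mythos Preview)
Your proposal is correct and follows essentially the same approach as the paper: induction on $n$, with the inductive step using Step~2 of Lemma~\ref{pla} to split each summand $\mathcal{K}$ once its stability under $\rho_{n+1}$ is established from commutativity. The only cosmetic difference is that you deduce stability of $\mathcal{K}$ by invoking Step~1 on each $\rho_i$ and then taking intersections, whereas the paper verifies $\rho_{n+1}z\in\mathcal{K}$ directly coordinate by coordinate; your added remark on why the refined sum remains direct is a detail the paper leaves implicit.
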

\begin{proof}
Let us prove the result by induction. The result is true if $n=1$ by definition 5: if $p_1$ is any projector, then $\mathcal{G}= K_{p_1} \oplus R_{p_1}$. The result is also true for $n=2$ by lemma \ref{pla}. Now, assume the result is true at induction stage $n$, and consider a family of $n+1$ commutative projectors $(\rho_i)_{i \in [1,n+1]}$. By the induction hypothesis applied to the family $(\rho_i)_{i \in [1,n]}$, we have:
\begin{align*}
    \mathcal{G}= \bigoplus_{\mathcal{C}_i \in \{K_{\rho_i},R_{\rho_i}\}} \bigcap_{i=1}^n \mathcal{C}_i\
\end{align*}

The goal is now to split each one of the subspaces $\mathcal{C}_i$ into 2 as follows: $\mathcal{C}_i=(\mathcal{C}_i \cap K_{\rho_{n+1}}) \oplus (\mathcal{C}_i \cap R_{\rho_{n+1}})$. If the latter is true, then the induction is true at stage $n+1$. By the step 2 of the proof of lemma \ref{pla}, we have $\mathcal{C}_i=(\mathcal{C}_i \cap K_{\rho_{n+1}}) \oplus (\mathcal{C}_i \cap R_{\rho_{n+1}})$ provided $\mathcal{C}_i$ is stable by $\rho_{n+1}$, i.e. $\rho_{n+1}\mathcal{C}_i \subseteq \mathcal{C}_i$. We now conclude the proof using a similar argument as step 1 of the proof of lemma \ref{pla}, since by assumption $\rho_{n+1}$ commutes with all $(\rho_{i})_{i \in [1,n]}$. Indeed, by definition of $\mathcal{C}_i$, it is an intersection of the following form, for some integers $n_1$, $\alpha_{k}$, $\beta_{k}$:
$$
\mathcal{C}_i = \cap_{k=1}^{n_1} K_{\rho_{\alpha_{k}}} \cap \cap_{k=1}^{n-n_1} R_{\rho_{\beta{k}}}
$$
Take $z \in \mathcal{C}_i$. The goal is to show that $\rho_{n+1} z \in \mathcal{C}_i$. Since $\rho_{n+1}$ commutes with every $\rho_{\alpha_{k}}$, we have $\rho_{\alpha_{k}} \rho_{n+1} z = \rho_{n+1} \rho_{\alpha_{k}}z=0$ since $z \in K_{\rho_{\alpha_{k}}}$, which shows that $\rho_{n+1} z \in \cap_{k=1}^{n_1} K_{\rho_{\alpha_{k}}}$. Similarly, since $R_{\rho_{\beta{k}}} = K_{\id -\rho_{\beta{k}}}$, we have $(\id-\rho_{\beta_{k}}) \rho_{n+1} z = \rho_{n+1} (\id-\rho_{\beta_{k}})z=0$ since $z \in R_{\rho_{\beta{k}}}$, which shows that $\rho_{n+1} z \in \cap_{k=1}^{n-n_1} R_{\rho_{\beta_{k}}}$, which concludes the proof.
\end{proof}

\begin{proposition}
\col{\textbf{(generalization of transitivity/cyclicity and unification with triviality)}} If a 2P game $f$ is zero-sum symmetric transitive (resp. cyclic) in the sense of \cite{2pszs}, then $f \in R_{\rho_{T}}$ (resp. $f \in K_{\rho_{T}}=R_{\id-\rho_T}$). Further, a 2P game is trivial in the sense of \cite{chaos} if and only if it is transitive, i.e. the class of trivial games coincides with $R_{\rho_{T}}$.
\end{proposition}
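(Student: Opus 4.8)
The plan is to reduce everything to a characterisation of the range and kernel of the function-level operator $\widehat{\mathcal{N}}$, from which the game-level statements about $\rho_T = (\widehat{\mathcal{N}}\,\cdot\,,\widehat{\mathcal{N}}\,\cdot\,)$ follow coordinatewise. First I would record the elementary facts that $\mathcal{N}_1$ and $\mathcal{N}_2$ are commuting idempotents --- $\mathcal{N}_1$ (resp. $\mathcal{N}_2$) is the projector onto functions not depending on the first (resp. second) argument --- and that $\mathcal{N}_1\mathcal{N}_2 = \mathcal{N}_2\mathcal{N}_1$ is the projector onto constant functions; a one-line computation then gives $\widehat{\mathcal{N}}^2 = \widehat{\mathcal{N}}$, consistent with $\rho_T$ being a projector as used in Corollary \ref{d8}.

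Next I would prove that $R_{\widehat{\mathcal{N}}}$ is exactly the space of additively separable functions $\{(x,y)\mapsto \phi(x)+\lambda(y)\}$. For the inclusion ``$\subseteq$'', note that for \emph{any} $g$, $\mathcal{N}_2 g$ depends only on $x$, $\mathcal{N}_1 g$ depends only on $y$, and $\mathcal{N}_1\mathcal{N}_2 g$ is a constant, so $\widehat{\mathcal{N}} g = \mathcal{N}_1 g + \mathcal{N}_2 g - \mathcal{N}_1\mathcal{N}_2 g$ is additively separable; in particular if $g = \widehat{\mathcal{N}} g$ then $g$ has this form. For ``$\supseteq$'', a direct substitution of $g(x,y) = \phi(x)+\lambda(y)$ into $\widehat{\mathcal{N}}$ returns $g$. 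Since $\rho_T$ acts on $(f_1,f_2)$ one coordinate at a time, $f \in R_{\rho_T}$ iff $f_1,f_2 \in R_{\widehat{\mathcal{N}}}$ iff each $f_j$ is additively separable --- which is precisely the definition of a trivial game in \cite{chaos}. This proves the second assertion: the class of trivial games equals $R_{\rho_T}$ (our generalised notion of transitivity).

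For the first assertion I would argue as follows. If $f$ is zero-sum symmetric transitive in the sense of \cite{2pszs}, then $f_1(x,y) = \phi(x)-\phi(y)$ is additively separable (take $\lambda = -\phi$), and $f_2 = -f_1$ (equivalently $f_2 = f_1^T$) is additively separable too, so $f \in R_{\rho_T}$ by the previous step. If instead $f$ is zero-sum symmetric cyclic, then $\mathcal{N}_2 f_1 = 0$ by definition, hence $\mathcal{N}_1\mathcal{N}_2 f_1 = \mathcal{N}_1(0) = 0$; moreover the zero-sum and symmetric hypotheses force $f_1(x,y) = -f_1(y,x)$, and since in the symmetric setting the two players share the same strategy space and measure, this antisymmetry turns $\mathcal{N}_1 f_1$ into $-\mathcal{N}_2 f_1$, so $\mathcal{N}_1 f_1 = 0$ as well. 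Therefore $\widehat{\mathcal{N}} f_1 = \mathcal{N}_1 f_1 + \mathcal{N}_2 f_1 - \mathcal{N}_1\mathcal{N}_2 f_1 = 0$, and by linearity $\widehat{\mathcal{N}} f_2 = -\widehat{\mathcal{N}} f_1 = 0$; hence $\rho_T f = 0$, i.e. $f \in K_{\rho_T} = R_{\id - \rho_T}$.

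The main obstacle is the cyclic case: one must promote the one-sided averaging condition $\mathcal{N}_2 f_1 = 0$ to the two-sided $\widehat{\mathcal{N}} f_1 = 0$, which genuinely uses the antisymmetry $f_1(x,y) = -f_1(y,x)$ inherited from ``zero-sum $+$ symmetric'' together with the identification of the two players' measures; without these the implication would fail. The transitivity half and the triviality equivalence are essentially bookkeeping once the description of $R_{\widehat{\mathcal{N}}}$ as the additively separable functions is in hand, the only subtlety being the routine check that passing from $\widehat{\mathcal{N}}$ to $\rho_T$ is legitimate because $\rho_T$ treats the two payoff coordinates independently.
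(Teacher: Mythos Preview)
Your proof is correct and follows essentially the same approach as the paper: both establish the claims by verifying $\widehat{\mathcal{N}} f_k = f_k$ (resp.\ $=0$) through the same algebra, using that $\widehat{\mathcal{N}} g$ is always additively separable and that separable functions are fixed by $\widehat{\mathcal{N}}$, and handling the cyclic case via the antisymmetry $f_1(x,y)=-f_1(y,x)$ together with $\mu_1=\mu_2$. Your presentation is slightly more structural---characterising $R_{\widehat{\mathcal{N}}}$ upfront and treating $\mathcal{N}_1,\mathcal{N}_2$ as commuting idempotents---whereas the paper writes out the integrals explicitly case by case, but the underlying argument is the same.
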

\begin{proof}
Remember the following definitions. $\mathcal{N}_2f(x,y):= \int f(x,z)d\mu_2(z)$, $\mathcal{N}_1f(x,y):= \int f(z,y)d\mu_1(z)$, and:
\begin{align*}
\bm{\col{\rho_{T}}}: (f_1,f_2) \to (\widehat{\mathcal{N}}f_1, \widehat{\mathcal{N}}f_2), \hspace{2mm} \widehat{\mathcal{N}}:= \mathcal{N}_1+\mathcal{N}_2-\mathcal{N}_1\mathcal{N}_2\ .
\end{align*}
First, if a 2P game $f=(f_1,f_2)$ is zero-sum symmetric transitive in the sense of \cite{2pszs}, then $\mathcal{X}_1=\mathcal{X}_2$, $\mu_1=\mu_2=:\mu$, $f_2=-f_1$ and $f_1(x,y)=\phi(x)-\phi(y)$ for some function $\phi$. We show that $\rho_T f=f$, namely that $f \in R_{\rho_T}$. For this, it suffices to show $\widehat{\mathcal{N}}f_1=f_1$. Indeed, $f_2=-f_1$ so $\widehat{\mathcal{N}}f_1=f_1$ yields $\widehat{\mathcal{N}}f_2=f_2$ by linearity. We have: 
\begin{align*}
\widehat{\mathcal{N}} f_1(x,y)&= \int f_1(x,z)d\mu(z) + \int f_1(z,y)d\mu(z)-\int \int f_1(z_1,z_2)d\mu(z_1)d\mu(z_2)\\
&=\phi(x)-\int \phi(z)d\mu(z) + \int \phi(z)d\mu(z)-\phi(y) - \left[\int \phi(z)d\mu(z)-\int \phi(z)d\mu(z)\right]\\
&=\phi(x)-\phi(y)=f_1(x,y)
\end{align*}
If the game is zero-sum symmetric cyclic in the sense of \cite{2pszs}, then $f_2=-f_1$, $f_1=-f_1^T$, and $\int f_1(x,z)d\mu(z)=0$ $\forall x \in \mathcal{X}_1$, so we have:
\begin{align*}
\widehat{\mathcal{N}} f_1(x,y)&= \int f_1(x,z)d\mu(z) + \int f_1(z,y)d\mu(z)-\int \int f_1(z_1,z_2)d\mu(z_1)d\mu(z_2)\\
&=0-\int f_1(y,z)d\mu(z)+\int \left(\int f_1(z_2,z_1)d\mu(z_1)\right)d\mu(z_2)\\
&=0-0+\int 0 \hspace{1mm} d\mu(z_2)=0
\end{align*}
hence indeed $f \in K_{\rho_T}$ ($f_2=-f_1$ so $\widehat{\mathcal{N}}f_2=-\widehat{\mathcal{N}}f_1=0$ by linearity).

Now, we show that $(f_1,f_2)$ is trivial in the sense of \cite{chaos} if and only if it is transitive. A trivial game is of the form $f_1(x,y)=\alpha_1(x)+\beta_1(y)$, $f_2(x,y)=\alpha_2(x)+\beta_2(y)$ for some functions $\alpha_i$, $\beta_i$. We have:
$$
\widehat{\mathcal{N}} f_k(x,y)= \int f_k(x,z)d\mu_2(z) + \int f_k(z,y)d\mu_1(z)-\int \int f_k(z_1,z_2)d\mu_1(z_1)d\mu_2(z_2)
$$
Therefore clearly, if a game belongs to the range $R_{\rho_T}$ of $\rho_T$, it is trivial since $\widehat{\mathcal{N}} f_k$ is of the form $\alpha_k(x)+\beta_k(y)$ for $\alpha_k(x)=\int f_k(x,z)d\mu_2(z)$, $\beta_k(y)= \int f_k(z,y)d\mu_1(z)-\int \int f_k(z_1,z_2)d\mu_1(z_1)d\mu_2(z_2)$. It remains to show the opposite, i.e. that if a game is trivial, it belongs to $R_{\rho_T}$, i.e. that $\widehat{\mathcal{N}} f_1=f_1$ and $\widehat{\mathcal{N}} f_2=f_2$. We have:
\begin{align*}
&\widehat{\mathcal{N}} f_k(x,y)= \int f_k(x,z)d\mu_2(z) + \int f_k(z,y)d\mu_1(z)-\int \int f_k(z_1,z_2)d\mu_1(z_1)d\mu_2(z_2)\\
&=\alpha_k(x) + \int \beta_k(z)d\mu_2(z) + \beta_k(y) + \int \alpha_k(z)d\mu_1(z)- \left[\int \alpha_k(z)d\mu_1(z) + \int \beta_k(z)d\mu_2(z)\right] \\&= \alpha_k(x)+\beta_k(y)=f_k(x,y)
\end{align*}
This concludes the proof.
\end{proof}

\begin{corollary}
\col{\textbf{(Canonical decomposition of 2P games)}}
$\rho_{Z}$, $\rho_{T}$ are commutative projectors. Consequently, by theorem \ref{decompa}, every 2P game can be decomposed into the direct sum of 4 components: \col{\textbf{(ZT)}} zero-sum-transitive, \col{\textbf{(ZCy)}} zero-sum-cyclic, \col{\textbf{(CT)}} cooperative-transitive, \col{\textbf{(CCy)}} cooperative-cyclic. Further, if $\mathcal{X}_1=\mathcal{X}_2$, then $\rho_{S}$ is a well-defined projector and commutes with $\rho_{Z}$, $\rho_{T}$. Consequently, any such game can be decomposed into the direct sum of 8 components: \col{\textbf{(ZST)}} zero-sum-symmetric-transitive, \col{\textbf{(ZSCy)}} zero-sum-symmetric-cyclic, \col{\textbf{(ZAT)}} zero-sum-antisymmetric-transitive, \col{\textbf{(ZACy)}} zero-sum-antisymmetric-cyclic, 
\col{\textbf{(CST)}} cooperative-symmetric-transitive, \col{\textbf{(CSCy)}} cooperative-symmetric-cyclic, \col{\textbf{(CAT)}} cooperative-antisymmetric-transitive, \col{\textbf{(CACy)}} cooperative-antisymmetric-cyclic.
\end{corollary}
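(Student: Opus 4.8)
The plan is to reduce the corollary to two verifications — that each of $\rho_{Z}$, $\rho_{S}$, $\rho_{T}$ is a projector in the sense of Definition~\ref{proj}, and that they pairwise commute — and then to invoke Theorem~\ref{decomp} (with $n=2$ for the first assertion and $n=3$ for the second) while identifying the kernel and range of each projector with the named class of games. All three operators are visibly linear, so the only content is idempotence and commutativity, and the cleanest route is to write the first two as $\rho_{Z}=\tfrac12(\id-\sigma)$ and, when $\mathcal{X}_1=\mathcal{X}_2$, $\rho_{S}=\tfrac12(\id+\theta)$, where $\sigma:(f_1,f_2)\mapsto(f_2,f_1)$ and $\theta:(f_1,f_2)\mapsto(f_2^{T},f_1^{T})$ are linear involutions ($\sigma^2=\id$; $\theta^2=\id$ using $(f^{T})^{T}=f$).

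First I would establish idempotence. For $\rho_{Z}$ and $\rho_{S}$ this is the elementary identity $\tfrac14(\id\pm A)^2=\tfrac12(\id\pm A)$ valid for any involution $A$. For $\rho_{T}$, which acts componentwise through $\widehat{\mathcal{N}}$, I would use the factorization $\widehat{\mathcal{N}}=\id-(\id-\mathcal{N}_1)(\id-\mathcal{N}_2)$: each $\mathcal{N}_i$ is a projector because $\mu_i$ is a probability measure, and $\mathcal{N}_1\mathcal{N}_2=\mathcal{N}_2\mathcal{N}_1$ by Fubini, so $(\id-\mathcal{N}_1)(\id-\mathcal{N}_2)$ is a projector by Lemma~\ref{pl}, hence so is $\widehat{\mathcal{N}}$ and therefore $\rho_{T}$.

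Next I would check commutativity. The involutions $\sigma$ and $\theta$ commute (composed in either order they send $(f_1,f_2)$ to $(f_1^{T},f_2^{T})$), so $\rho_{Z}$ and $\rho_{S}$ commute. Since $\rho_{T}$ acts diagonally through a linear map it commutes with $\sigma$ automatically, which gives $\rho_{Z}\rho_{T}=\rho_{T}\rho_{Z}$ via $\rho_{Z}=\tfrac12(\id-\sigma)$. The only commutation that genuinely uses the hypothesis $\mathcal{X}_1=\mathcal{X}_2$ is $\rho_{S}\rho_{T}=\rho_{T}\rho_{S}$, which reduces to $\rho_{T}\theta=\theta\rho_{T}$, i.e. to the identity $\widehat{\mathcal{N}}(f^{T})=(\widehat{\mathcal{N}}f)^{T}$; this follows from $\mathcal{N}_1(f^{T})=(\mathcal{N}_2 f)^{T}$ and $\mathcal{N}_2(f^{T})=(\mathcal{N}_1 f)^{T}$ (which require the base measures on the two — now identical — strategy spaces to agree, as is implicit in the notion of a symmetric game), the consequent $\mathcal{N}_1\mathcal{N}_2(f^{T})=(\mathcal{N}_1\mathcal{N}_2 f)^{T}$, and linearity. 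I expect this interaction between the ``averaging'' projectors and transposition to be the only delicate point; everything else is bookkeeping.

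Finally I would conclude. Applying Theorem~\ref{decomp} to the commuting pair $(\rho_{Z},\rho_{T})$ gives $\mathcal{G}$ as the direct sum of the four intersections $\mathcal{C}\cap\mathcal{C}'$ with $\mathcal{C}\in\{K_{\rho_{Z}},R_{\rho_{Z}}\}$ and $\mathcal{C}'\in\{K_{\rho_{T}},R_{\rho_{T}}\}$. Identifying $R_{\rho_{Z}}=\{f_2=-f_1\}$ with zero-sum games, $K_{\rho_{Z}}=\{f_1=f_2\}$ with cooperative games, $R_{\rho_{T}}$ with transitive (equivalently, by Proposition~\ref{unif}, trivial) games, and $K_{\rho_{T}}=R_{\id-\rho_{T}}$ with cyclic games, these four summands are exactly ZT, ZCy, CT, CCy. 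When $\mathcal{X}_1=\mathcal{X}_2$, applying Theorem~\ref{decomp} to $(\rho_{Z},\rho_{S},\rho_{T})$ gives the eight intersections; adding the identifications $R_{\rho_{S}}=\{f_1=f_2^{T}\}$ (symmetric) and $K_{\rho_{S}}=\{f_1=-f_2^{T}\}$ (antisymmetric) labels them precisely as ZST, ZSCy, ZAT, ZACy, CST, CSCy, CAT, CACy.
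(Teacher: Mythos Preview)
Your proof is correct and follows the same overall route as the paper's: verify that each operator is idempotent, verify pairwise commutativity, then invoke Theorem~\ref{decomp} and identify kernels/ranges with the named game classes. Your execution is a bit cleaner --- writing $\rho_{Z}=\tfrac12(\id-\sigma)$, $\rho_{S}=\tfrac12(\id+\theta)$ for involutions and factoring $\widehat{\mathcal{N}}=\id-(\id-\mathcal{N}_1)(\id-\mathcal{N}_2)$ to get idempotence via Lemma~\ref{pl}, rather than expanding everything by hand as the paper does --- but the one substantive step, namely $(\widehat{\mathcal{N}}f)^{T}=\widehat{\mathcal{N}}(f^{T})$ under $\mu_1=\mu_2$, is handled identically in both.
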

\begin{proof}
Remember that we have the following definitions for the operators corresponding to zero-sum, symmetric and transitive games:
\begin{align*}
\begin{split}
& \bm{\col{\rho_{Z}}}: (f_1,f_2) \to \frac{1}{2}(f_1-f_2, f_2-f_1), \hspace{2mm} 
\bm{\col{\rho_{S}}}: (f_1,f_2) \to \frac{1}{2}(f_1+f_2^T, f_2+f_1^T)\\
& \bm{\col{\rho_{T}}}: (f_1,f_2) \to (\widehat{\mathcal{N}}f_1, \widehat{\mathcal{N}}f_2), \hspace{2mm} \widehat{\mathcal{N}}:= \mathcal{N}_1+\mathcal{N}_2-\mathcal{N}_1\mathcal{N}_2\ .
\end{split}
\end{align*}
where $\mathcal{N}_2f(x,y):= \int f(x,z)d\mu_2(z)$, $\mathcal{N}_1f(x,y):= \int f(z,y)d\mu_1(z)$. For convenience we write below the complementary (in the sense of direct sum) operators $\id-\rho_{Z}$, $\id-\rho_{S}$, $\id-\rho_{T}$ corresponding respectively to cooperative, antisymmetric and cyclic games:
\begin{align*}
\begin{split}
& \id-\rho_{Z}: (f_1,f_2) \to \frac{1}{2}(f_1+f_2, f_1+f_2), \hspace{2mm} 
\id-\rho_{S}: (f_1,f_2) \to \frac{1}{2}(f_1-f_2^T, f_2-f_1^T)\\
& \id-\rho_{T}: (f_1,f_2) \to (f_1-\widehat{\mathcal{N}}f_1, f_2-\widehat{\mathcal{N}}f_2)
\end{split}
\end{align*}

We need to check that $\rho_{Z}$, $\rho_{T}$, $\rho_{S}$ are commutative projectors. The fact that $\rho_{Z}$, $\rho_{T}$ are commutative projectors has been observed in \cite{proj}. First let's check that they are projectors. They are indeed linear, so it remains to show that they verify $\rho^2=\rho$. We do so by direct computation:
\begin{align*}
&\rho_Z^2(f_1,f_2)=\frac{1}{2}\rho_Z(f_1-f_2,f_2-f_1)=\frac{1}{4}(2f_1-2f_2,2f_2-2f_1)\\&=\frac{1}{2}(f_1-f_2,f_2-f_1)=\rho_Z(f_1,f_2)\\
&\rho_S^2(f_1,f_2)=\frac{1}{2}\rho_S(f_1+f_2^T,f_2+f_1^T)=\frac{1}{4}(f_1+(f_1^T)^T+2f_2^T,f_2+(f_2^T)^T+2f_1^T)\\
&=\frac{1}{2}(f_1+f_2^T,f_2+f_1^T)=\rho_S(f_1,f_2)\\
\end{align*}
For $\rho_T$, we have $\rho_T^2(f_1,f_2)=(\widehat{\mathcal{N}}^2f_1,\widehat{\mathcal{N}}^2f_2)$. We show that $\widehat{\mathcal{N}}^2f_k=\widehat{\mathcal{N}}f_k$. Denote $\gamma_k:=\int \int f_k(z_1,z_2)d\mu_1(z_1)d\mu_2(z_2)$. We have:
\begin{align*}
& \widehat{\mathcal{N}}^2f_k(x,y) = \int f_k(x,z_2)d\mu_2(z_2) + \gamma_k -\gamma_k
+ \gamma_k + \int f_k(z_1,y)d\mu_1(z_1)-\gamma_k
-(\gamma_k + \gamma_k -\gamma_k)\\
& = \int f_k(x,z_2)d\mu_2(z_2) + \int f_k(z_1,y)d\mu_1(z_1) - \gamma_k=\widehat{\mathcal{N}}f_k(x,y)
\end{align*}

Finally, we check commutativity. We have:
$$
\rho_Z \rho_S(f_1,f_2)=\frac{1}{2}\rho_Z(f_1+f_2^T,f_2+f_1^T)=\frac{1}{4}(f_1+f_2^T-f_2-f_1^T,-f_1-f_2^T+f_2+f_1^T)
$$
$$
\rho_S \rho_Z(f_1,f_2)=\frac{1}{2}\rho_S(f_1-f_2,f_2-f_1)=\frac{1}{4}(f_1+f_2^T-f_2-f_1^T,-f_1-f_2^T+f_2+f_1^T)
$$
Hence indeed $\rho_Z$ and $\rho_S$ commute. Then, using linearity of $\widehat{\mathcal{N}}$:
$$
\rho_T\rho_Z(f_1,f_2)=\frac{1}{2}(\widehat{\mathcal{N}}f_1-\widehat{\mathcal{N}}f_2,\widehat{\mathcal{N}}f_2-\widehat{\mathcal{N}}f_1)
$$
$$\rho_Z\rho_T(f_1,f_2)=\rho_Z(\widehat{\mathcal{N}}f_1,\widehat{\mathcal{N}}f_2)=\frac{1}{2}(\widehat{\mathcal{N}}f_1-\widehat{\mathcal{N}}f_2,\widehat{\mathcal{N}}f_2-\widehat{\mathcal{N}}f_1)
$$
hence indeed $\rho_Z$ and $\rho_T$ commute. Finally:
$$
\rho_T\rho_S(f_1,f_2)=\frac{1}{2}(\widehat{\mathcal{N}}f_1+\widehat{\mathcal{N}}f_2^T,\widehat{\mathcal{N}}f_2+\widehat{\mathcal{N}}f_1^T)$$
$$\rho_S\rho_T(f_1,f_2)=\rho_S(\widehat{\mathcal{N}}f_1,\widehat{\mathcal{N}}f_2)=\frac{1}{2}(\widehat{\mathcal{N}}f_1+(\widehat{\mathcal{N}}f_2)^T,\widehat{\mathcal{N}}f_2+(\widehat{\mathcal{N}}f_1)^T)
$$
So it remains to show $(\widehat{\mathcal{N}}f_k)^T=\widehat{\mathcal{N}}f_k^T$. Remember that $\rho_T$ is well-defined only when $\mathcal{X}_1=\mathcal{X}_2=:\mathcal{X}$, $\mu_1=\mu_2=:\mu$. In that case we have:
$$
\widehat{\mathcal{N}}f_k^T(x,y)=\int f_k^T(x,z)d\mu(z) + \int f_k^T(z,y)d\mu(z)-\int \int f_k^T(z_1,z_2)d\mu(z_1)d\mu(z_2) 
$$
$$
=\int f_k(z,x)d\mu(z) + \int f_k(y,z)d\mu(z)-\int \int f_k(z_2,z_1)d\mu(z_1)d\mu(z_2)
$$
$$= \widehat{\mathcal{N}}f_k(y,x)=(\widehat{\mathcal{N}}f_k)^T(x,y)
$$
This concludes the proof.
\end{proof}

\section{Experiment details and additional findings}
\label{seca3}

\subsection{Zero-sum bimatrix games}

We report in figure \ref{fz1} the average performance and standard deviation of $\beta_\tau$ ($\tau=500$) of all algorithms on 200 randomly sampled bimatrix zero-sum games in dimension $n=m=$ 25, 50, 75, 100, for a wide range of learning rates $h$. We see that CMWU performs well across a wide band of learning rates.

\subsection{Learning Games}\label{sec:suppl:learning_games}

\textbf{Terminology.} We use the terminology of corollary 2.1, namely \col{\textbf{(ZT)}} zero-sum-transitive, \col{\textbf{(ZCy)}} zero-sum-cyclic, \col{\textbf{(CT)}} cooperative-transitive, \col{\textbf{(CCy)}} cooperative-cyclic, \col{\textbf{(ZST)}} zero-sum-symmetric-transitive, \col{\textbf{(ZSCy)}} zero-sum-symmetric-cyclic, \col{\textbf{(ZAT)}} zero-sum-antisymmetric-transitive, \col{\textbf{(ZACy)}} zero-sum-antisymmetric-cyclic, 
\col{\textbf{(CST)}} cooperative-symmetric-transitive, \col{\textbf{(CSCy)}} cooperative-symmetric-cyclic, \col{\textbf{(CAT)}} cooperative-antisymmetric-transitive, \col{\textbf{(CACy)}} cooperative-antisymmetric-cyclic.

\textbf{Computation of game components.} Let $(A,B)$ a bimatrix game. The operators $\rho_Z$, $\rho_S$, $\rho_T$ of section 3 act as defined below in the bimatrix case (note that as mentioned in corollary 2.1, $\rho_S$ is well defined only if $n=m$). As mentioned in the main text, our concepts of cyclic and transitive game (materialized by the operators $\id-\rho_T$ and $\rho_T$) generalize that of \cite{2pszs} which considered the zero-sum symmetric case: in the case where the game $(A,B)$ is zero-sum symmetric, then $B=-A$, $A=-A^T$, and $\rho_T$, $\id-\rho_T$ reduce to \cite{2pszs}, whereas in general they yield different expressions. The zero-sum symmetric transitive and cyclic components of the general game $(A,B)$ are simply given by $\rho_Z \rho_S \rho_T (A,B)$ and $\rho_Z \rho_S (\id-\rho_T) (A,B)$. This way, we can compute all $8$ components of corollary 2.1 simply by applying these operators one after the other, where the order doesn't change as they commute.
$$
\rho_Z(A,B)=\frac{1}{2}(A-B,B-A), \hspace{3mm} \rho_C(A,B):=[\id-\rho_Z](A,B)=\frac{1}{2}(A+B,A+B)
$$
$$
\rho_S(A,B)=\frac{1}{2}(A+B^T,B+A^T), \hspace{3mm} \rho_A(A,B):=[\id-\rho_S](A,B)=\frac{1}{2}(A-B^T,B-A^T)
$$
$$
\rho_T(A,B)=(\widehat{A}^{(1)}+\widehat{A}^{(2)}-\widehat{A},
\widehat{B}^{(1)}+\widehat{B}^{(2)}-\widehat{B}),
$$
$$
\rho_{Cy}(A,B):=[\id-\rho_T](A,B)=(A-\widehat{A}^{(1)}-\widehat{A}^{(2)}+\widehat{A},
B-\widehat{B}^{(1)}-\widehat{B}^{(2)}+\widehat{B})
$$

where $\widehat{A}^{(1)}$, $\widehat{A}^{(2)}$, $\widehat{A}$ are the matrices with entries $\widehat{A}^{(1)}_{ij}:=\frac{1}{m}\sum_{j=1}^m A_{ij}$, $\widehat{A}^{(2)}_{ij}:=\frac{1}{n}\sum_{i=1}^n A_{ij}$, $\widehat{A}_{ij}:=\frac{1}{mn}\sum_{i=1}^n \sum_{j=1}^m A_{ij}$, and $\rho_C$, $\rho_A$, $\rho_{Cy}$ are associated to cooperative, antisymmetric and Cyclic games. 

With these definitions, the 8 components of corollary 2.1 are simply computed by composition of above projectors, namely: \col{\textbf{(ZT)}} $\rho_Z \rho_T(A,B)$, \col{\textbf{(ZCy)}} $\rho_Z (\id-\rho_T)(A,B)$, \col{\textbf{(CT)}} $(\id-\rho_Z) \rho_T(A,B)$, \col{\textbf{(CCy)}} $(\id-\rho_Z)(\id-\rho_T)(A,B)$, \col{\textbf{(ZST)}} $\rho_Z \rho_S \rho_T (A,B)$, \col{\textbf{(ZSCy)}} $\rho_Z \rho_S (\id-\rho_T) (A,B)$, \col{\textbf{(ZAT)}} $\rho_Z (\id-\rho_S) \rho_T (A,B)$, \col{\textbf{(ZACy)}} $\rho_Z (\id-\rho_S) (\id-\rho_T) (A,B)$, 
\col{\textbf{(CST)}} $(\id-\rho_Z) \rho_S \rho_T (A,B)$, \col{\textbf{(CSCy)}} $(\id-\rho_Z) \rho_S (\id-\rho_T) (A,B)$, \col{\textbf{(CAT)}} $(\id-\rho_Z) (\id-\rho_S) \rho_T (A,B)$, \col{\textbf{(CACy)}} $(\id-\rho_Z) (\id-\rho_S) (\id-\rho_T) (A,B)$.

Following definitions 6 and 7, the \col{\textbf{game signature}} is taken to be the vector of size 8 containing the norms $\frac{1}{2}(||A_i||+||B_i||)$ of the components $(A_i,B_i)_{i \in [1,8]}$, where the matrix norm is chosen to be the $L_2$ norm. We further divide these 8 norms by their overall sum so as to interpret them as weights.

\textbf{Training details.} The RL policy was trained in the RLlib framework \cite{rllib}, and run on AWS using a EC2 C5 24xlarge instance with 96 CPUs, using Proximal Policy Optimization \cite{ppo}. We considered square games of size $n=m=10$. We used configuration parameters in line with \cite{ppo}, that is a clip parameter of 0.3, an adaptive KL penalty with a KL target of $0.01$ and a learning rate of $10^{-6}$. Episodes were taken of length $\tau=1000$ time steps, using $B=90$ parallel runs in between policy updates ($12000$ policy updates). As a result, each policy update was performed with a batch size of $1000 \cdot 90$ timesteps, together with 30 iterations of stochastic gradient descent with SGD minibatch size of 8192. We used a fully connected neural net with 2 hidden layers, 256 nodes per layer, and $\tanh$ activation. Since our action space is continuous, the outputs of the neural net are the mean and stDev of a standard normal distribution, which is then used to sample actions probabilistically (the covariance matrix across actions is chosen to be diagonal).

At test time, we use a different budget for each game type since they are more or less complex and hence take more or less time to converge). These are reported in table \ref{tabbu}.

\begin{table}[!ht]
\caption{Budget $\tau$ used at test time.}
\label{tabbu}
\begin{center}
\begin{tabular}{ll}
\hline
\makecell{Game Type} & \makecell{Budget ($10^2$ iterations)} \\ 
\hline
\textcolor{hanpurple}{ZT}&  2\\  
\textcolor{darktangerine}{ZCy}&  40\\ 
 \textcolor{darkpastelgreen}{CT}&  2\\  
\textcolor{deepcerise}{CCy}& 8\\ 
\textcolor{hanpurple}{ZT} + \textcolor{darktangerine}{ZCy}&40\\  
\textcolor{hanpurple}{ZT} + \textcolor{darkpastelgreen}{CT}& 4\\ 
\textcolor{hanpurple}{ZT} + \textcolor{deepcerise}{CCy}&  4\\ 
 \textcolor{darktangerine}{ZCy} + \textcolor{darkpastelgreen}{CT}& 40\\ 
 \textcolor{darktangerine}{ZCy} + \textcolor{deepcerise}{CCy}& 40\\  
\textcolor{darkpastelgreen}{CT} + \textcolor{deepcerise}{CCy}&  4\\ 
\textcolor{hanpurple}{ZT} + \textcolor{darktangerine}{ZCy} + \textcolor{darkpastelgreen}{CT}& 40\\ 
\textcolor{hanpurple}{ZT} + \textcolor{darktangerine}{ZCy} + \textcolor{deepcerise}{CCy}& 40\\ 
\textcolor{hanpurple}{ZT} + \textcolor{darkpastelgreen}{CT} + \textcolor{deepcerise}{CCy}& 4\\ 
\textcolor{darktangerine}{ZCy} + \textcolor{darkpastelgreen}{CT} + \textcolor{deepcerise}{CCy}&  40\\
\textcolor{hanpurple}{ZT} + \textcolor{darktangerine}{ZCy} + \textcolor{darkpastelgreen}{CT} + \textcolor{deepcerise}{CCy}&  40\\
 \hline
\end{tabular}\end{center}
\end{table}

\textbf{Additional results.}  In figures \ref{l22}-\ref{l33} we report the same metrics as in figures \ref{fff2} and \ref{ff2}, but for all other game mixtures considered in table \ref{tab1}. In figures \ref{ff2}, \ref{l22}, \ref{l32}, for each seed, we compute the mean value and correlation between time-series of coefficients, which are then averaged over seeds. 

\subsection{SHAP Analysis}
In addition to the results in Section~\ref{sec:suppl:learning_games} and as presented in figure \ref{fff21}, we also performed a SHAP analysis of the learned policy in the case of Full-RL CMWU. This explainability method --- as first introduced in \cite{shap} --- aids in understanding the marginal contribution of each input feature to the output of a machine learning model. In this case, we were interested in two key questions:
\begin{enumerate}
    \item Which features were most important on a global level?
    \item How do the importances of each feature evolve as a function of the iteration during a given rollout?
\end{enumerate}
The former is a common question to ask, and provides an overall perspective on the behaviour of the learnt policy. The latter is much more refined and, as we shall see (below), reveals that there is non-trivial use of features in the early stages of a trajectory. What's more, these importance processes appear to vary significantly between the 8 different game types.

In all cases presented below, the SHAP values were computed using the ``KernelExplainer'' provided by the \textsf{SHAP} package~\cite{shap}. For each unique game type we sampled 200 trajectories of 2000 steps as generated by the policy to construct a dataset. To simplify the analysis of the SHAP values, we also grouped features into
\begin{enumerate}
    \item (\textbf{BR}) the best response gaps $\delta_1(x^t,y^t)$, $\delta_2(x^t,y^t)$ ;
    \item (\textbf{GRAD}) the current gradients $Ay^t$, $B^Tx^t$;
    \item (\textbf{HESS}) the current Hessians $H_{x^t}=A^TD_{x^t}A$, $H_{y^t}=BD_{y^t}B^T$;
    \item (\textbf{PAY}) the current payoffs $x^{t,T}Ay^t$, $x^{t,T}By^t$;
\end{enumerate} see Section~3.2 for more details. This aggregation was performed by \emph{summing} over the contribution of each component of the feature groups in one of two ways: either sum over the absolute SHAP values as in figure \ref{fff21}, or take the absolute value after summing over the signed SHAP values as in figures \ref{fig:shap:signed_grad}-\ref{fig:shap:signed_hess}. The latter can always be done while retaining consistency due the additivity property of Shapley values. It is less clear what the former encodes, since we may lose correlation interaction effects, but these values are informative nonetheless. As such, we include both sets of results.

\paragraph{Global importance.}
The global importance across feature groups and game types is summarised in Figures~\ref{fff21} and \ref{fig:shap:signed_grad}-\ref{fig:shap:signed_hess}. Observe that while the contributions do differ between the two aggregation approaches, the distribution is broadly the same. We first note that \textsc{ZST} is unique amongst the eight game types in that almost all emphasis is placed on the gradient inputs; this is true for the gradient and Hessian outputs $G_k=h_k$ and $H_k=-h_k \epsilon_k$. In general, this is considered to be the ``easiest'' of the eight core types, so it is perhaps not so surprising. We also find that there is a tangible difference between the importance of inputs between the learnt gradient and Hessian coefficients. For example, more emphasis is put on the payoff terms with respect the Hessian, and, conversely, the best response gaps only appear significantly for the gradient outputs. There also appears to be quite different behaviour between the learnt coefficients for zero-sum versus co-operative games, while in other cases (such as \textsc{CST} and \textsc{CAT} compared to \textsc{CSCy} and \textsc{CACy}) where there appear to be emergent symmetries. It is of course hard to draw strict conclusions from these phenomena, but it is absolutely clear that the network has learnt tangibly different policies as a function of the game signature. 

\paragraph{Iteration-level importance.}
The second type of analysis we performed focused on the evolution of feature importance as a function of intra-trajectory iteration. The results, which are summarised in Figure~\ref{fig:shap:abs_evolution}, suggest a number of intriguing properties of the learnt policy. Firstly, we find that in all cases there exists a transient regime in the first $\sim 100$ iterations. This is particularly noticeable in the \textsc{ZST} case in which the gradient importance grows very sharply from around 70\% to nearly 100\%. This rapid change in the early stages of learning is consistent across most games. However, it is clear that the zero-sum and cyclic variants have a much longer transient period. Indeed, for \textsc{ZSCy} it takes nearly 2000 steps to stabilise to some fixed limit point. This aligns closely with the intuition that zero-sum/cyclic games are generally harder to solve.

\begin{figure}[ht]
\subfigure{\includegraphics[width=1.\columnwidth]{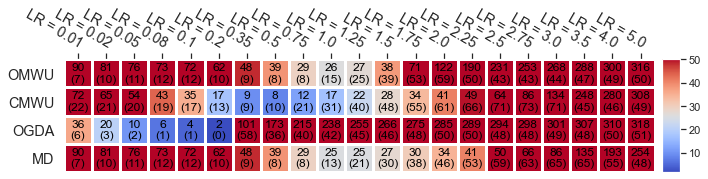}}
\subfigure{\includegraphics[width=1.\columnwidth]{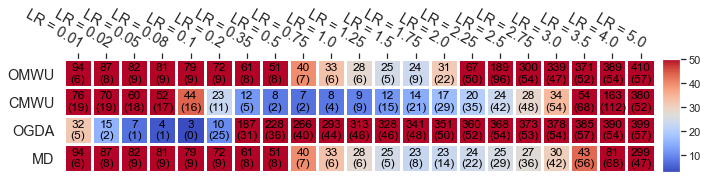}}
\subfigure{\includegraphics[width=1.\columnwidth]{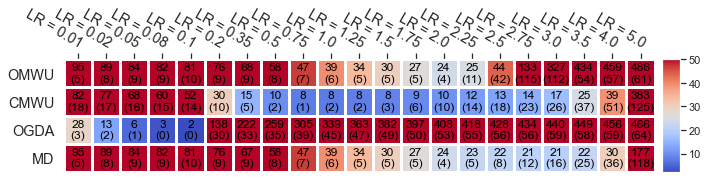}}
\subfigure{\includegraphics[width=1.\columnwidth]{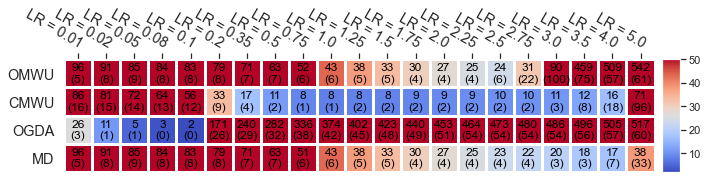}}
\caption{Zero-sum case: average score $\beta_\tau$ across various learning rates $h$ and game sizes $n$. $\tau=500$. $\epsilon=0.25 \cdot h^{-1}n$. StDev in brackets. From \textbf{Top} to \textbf{Bottom}: $n=25,50,75,100$.}
\label{fz1}
\end{figure}

\begin{figure}[ht]
  \centering
  \centerline{\includegraphics[width=1.\columnwidth]{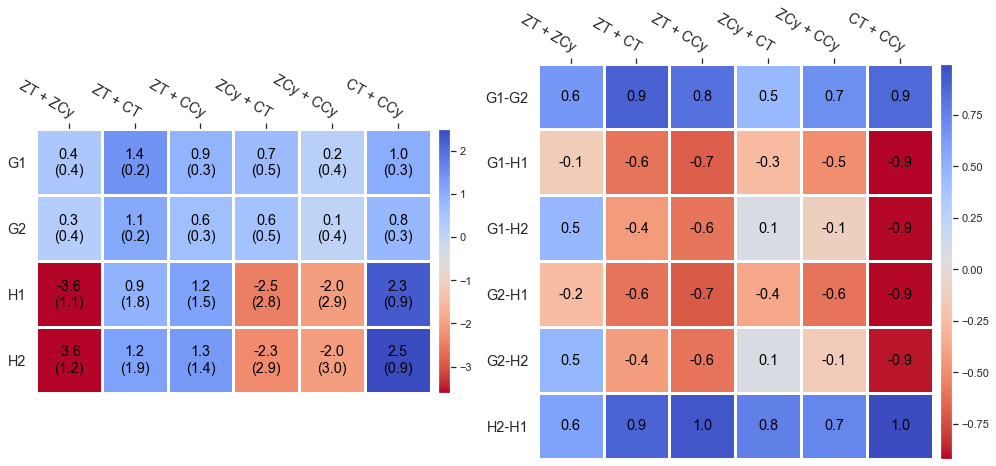}}
  \caption{full-RL CMWU coefficients of gradient $G_k=h_k$ and Hessian $H_k=-h_k \epsilon_k $ of players $k=1,2$ across game types (8 pure game components - mixtures of 2). \textbf{(left)} Mean value. \textbf{(right)} Pairwise correlations. StDev in brackets.}
  \label{l22}
\end{figure}
\begin{figure}[ht]
  \centering
  \centerline{\includegraphics[width=1.\columnwidth]{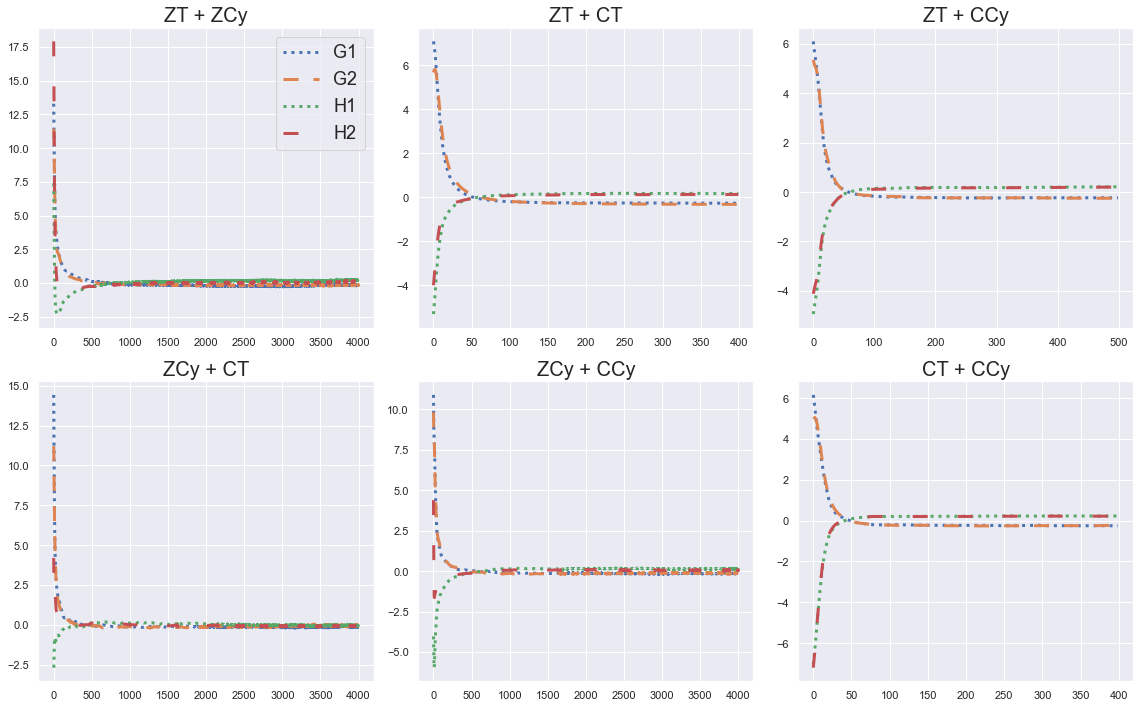}}
  \caption{Average standardized per-episode-trajectory ("shape") of coefficients of gradient $G_k=h_k$ and Hessian $H_k=-h_k \epsilon_k $ of players $k=1,2$ across game types (8 pure game components - mixtures of 2), as a function of time $t$.}
  \label{l23}
\end{figure}

\begin{figure}[ht]
  \centering
  \centerline{\includegraphics[scale=0.4]{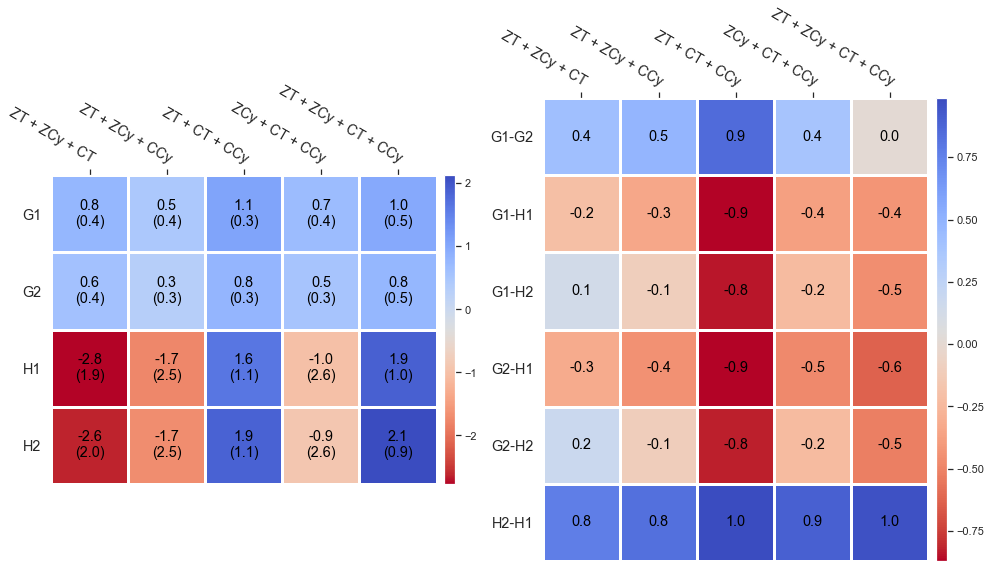}}
  \caption{full-RL CMWU coefficients of gradient $G_k=h_k$ and Hessian $H_k=-h_k \epsilon_k $ of players $k=1,2$ across game types (8 pure game components - mixtures of 3-4). \textbf{(left)} Mean value. \textbf{(right)} Pairwise correlations. StDev in brackets.}
  \label{l32}
\end{figure}
\begin{figure}[ht]
  \centering
  \centerline{\includegraphics[scale=0.4]{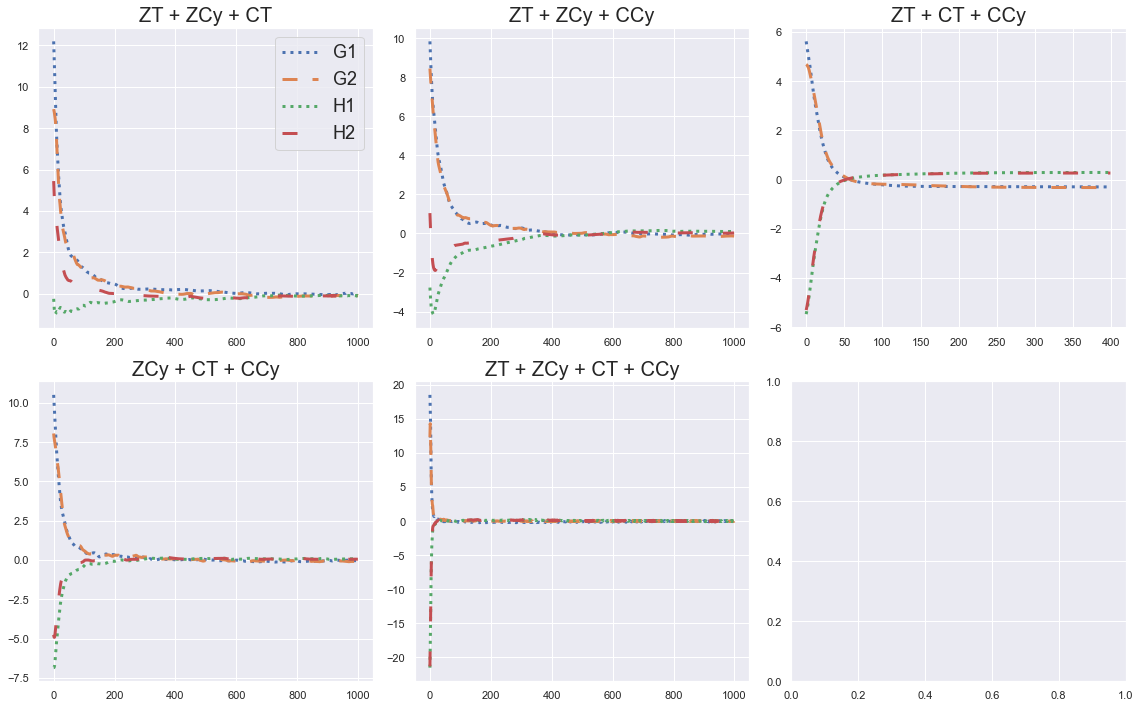}}
  \caption{Average standardized per-episode-trajectory ("shape") of coefficients of gradient $G_k=h_k$ and Hessian $H_k=-h_k \epsilon_k $ of players $k=1,2$ across game types (8 pure game components - mixtures of 3-4), as a function of time $t$.}
  \label{l33}
\end{figure}


\begin{figure}[ht]
  \centering
  \centerline{\includegraphics[scale=0.4]{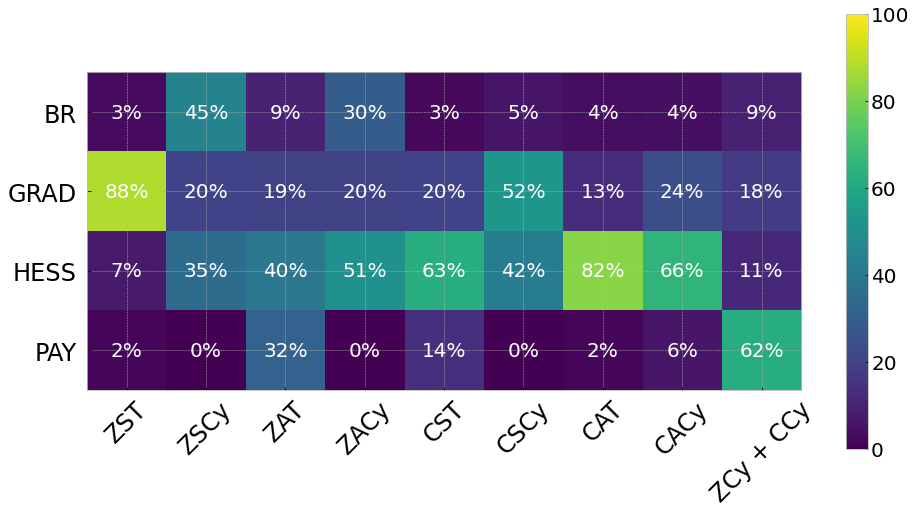}}
  \caption{Percentage contribution towards absolute SHAP importance aggregated across signed values with respect to \emph{gradient learning rates} $G_k=h_k$ for players $k=1,2$ (partial policy outputs) for the four input groups and eight unique game types. These values were computed for the data points at iteration 50 of each trajectory. Columns are normalised to unity with percentages illustrated explicitly in each box as well as by colour.}
  \label{fig:shap:signed_grad}
\end{figure}

\begin{figure}[ht]
  \centering
  \centerline{\includegraphics[scale=0.4]{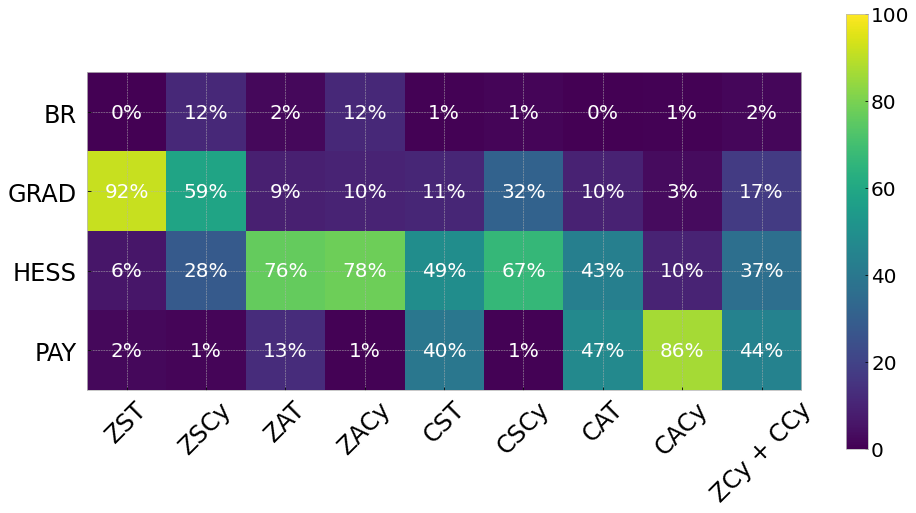}}
  \caption{Percentage contribution towards absolute SHAP importance aggregated across signed values with respect to \emph{Hessian learning rates} $H_k=-h_k \epsilon_k$ (partial policy outputs) for the four input groups and eight unique game types. These values were computed for the data points at iteration 50 of each trajectory. Columns are normalised to unity with percentages illustrated explicitly in each box as well as by colour.}
  \label{fig:shap:signed_hess}
\end{figure}

\begin{figure}[ht]
  \centering
  \centerline{\includegraphics[width=1.\columnwidth]{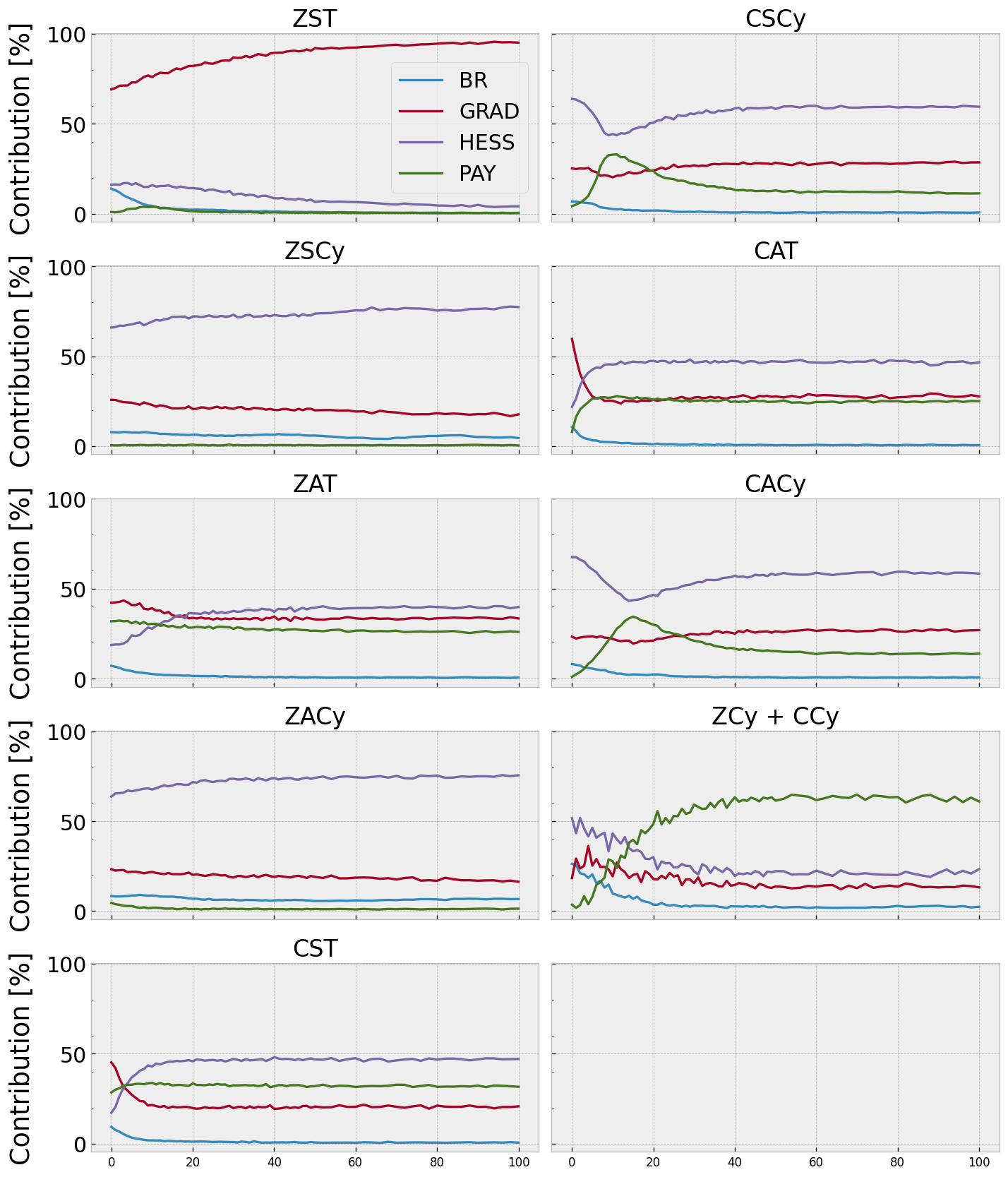}}
  \caption{Percentage contribution towards absolute SHAP importance aggregated across the absolute values with respect to \emph{all learning rates} for the four input groups and eight unique game types over time. In each plot, the curves are normalised to unity at any given iteration.}
  \label{fig:shap:abs_evolution}
\end{figure}

\end{document}